\newtheorem{theorem}{Theorem}[section]
\newtheorem{lemma}[theorem]{Lemma}
\newtheorem{corollary}[theorem]{Corollary}
\newtheorem{remark}{Remark}
\newtheorem{exmp}{Example}[section]
\numberwithin{equation}{section}
\numberwithin{table}{section}
\begin{document}

\begin{frontmatter}

\title{Optimal two-level designs for partial profile choice experiments}

\author{Soumen Manna, Ashish Das}

\address{Department of Mathematics, Indian Institute of Technology Bombay, India.}

\begin{abstract}
We improve the existing results of optimal partial profile paired choice designs and provide new designs for situations where the choice set sizes are greater than two.  The optimal designs are obtained under the main effects models and the broader main effects model.  
\end{abstract}

\begin{keyword}
Choice design \sep Main-effects model \sep Broader main-effects model \sep Universal optimality.

\end{keyword}

\end{frontmatter}

\section{Introduction}
Discrete choice experiments are a standard tool in marketing research for quantifying
consumer preferences. In a discrete choice experiment, levels of different attributes are considered jointly in a product profile or alternative. A choice experiment uses a selected number of profiles which are grouped into choice sets. Such a group of choice sets is called a choice design. Consider a choice design consisting of $N$ choice sets, each containing $m$ profiles and each profile is a combination of $n$ factors each at two levels. Each choice set represents a virtual market from which respondents indicate the product that they prefer. The statistical analysis of the respondents' choices employs discrete choice modeling to estimate the effects of attributes or their interactions. 

As the number of factors $n$ increases, respondents often feel the cognitive burden and thus tend to exhibit inconsistent choice behavior. This is termed as information overload in marketing literature. To overcome this situation, researchers propose partial profile choice experiments where a certain number of factors remain unchanged in all the profiles of each choice set. The profile attributes that vary in a partial profile choice set are called active factors for that choice set. These active factors may differ from choice set to choice set but the number of active factors in each choice set always remains same. The number of active factors in a choice design is called the profile strength and is denoted by $\rho$. Clearly, when $\rho = n$, a partial profile choice experiment is same as a full profile choice experiment.  
%

\cite{r4} present a review of partial profile optimal choice designs. The literature so far on partial profile choice designs is mainly focused on paired choice designs under the main effects model (where two and higher order interaction effects are assumed to be zero). In this paper, under the main effects model, we provide a method of construction leading to optimal designs requiring lesser number of choice sets than the existing optimal partial profile paired choice designs. We also extend the results to the general choice sets of size $m$. Finally, we obtain optimal designs under the broader main effects model (where two factor interactions effects are not assumed to be zero but three and higher order interaction effects are assumed to be zero).

Under the multinomial logit model, in Section 2, we give the general set-up and characterize the information matrix. Section 3 provides optimal designs under the main effects model while Section 4 gives optimal designs under the broader main effects model.

\section{General set-up and the information matrix }
Consider a choice experiment with $n$ factors each at two levels, 0 and 1. We denote a choice set of size $m$ by $(T_1, T_2, \ldots, T_m)$, where a typical treatment combination $T_{i}=(i_{1} \ldots i_{r} \ldots i_{n}),i_{r}=0,1; r = 1,2,\ldots,n$. Then a partial profile choice design $d = d(N,n,m,\rho)$ is defined as $d = \{ (T_1, T_2, \ldots, T_m): 1_w = \cdots =m_w, \text{for at least $(n-\rho)$ components}$ $ w \in (1,2,\ldots,n)\}$, i.e. $d$ is a choice design such that for each choice set $ (T_1, T_2, \ldots, T_m)$, there are at least $(n-\rho)$ factors that remain in constant level among all the treatments $ T_1, T_2, \ldots, T_m$. Clearly, for $m=2$, this definition reduces to the one given by \cite{r3}.

Let $A_i$, $i = 1,2,\ldots, m$, be $N\times n$ matrices with entries 0 and 1. Then a partial profile choice design $d$ can also be represented in matrix notation as $d=(A_1, A_2, \ldots, A_m)$, where the $p$-th row from each $A_i$ makes the $p$-th choice set  $S_{pm}$ (say) and hence $d = \{ S_{pm}: p = 1,2,\ldots,N\}$. Henceforth in this paper, by a design or a choice design, we mean a partial profile choice design. 

Let $\Lambda$ be the information matrix for treatment effects corresponding to a choice design $d$, and let $B$ be the orthonormal treatment contrast matrix corresponding to the factorial effects of interest. Then the information matrix of $d$ for the factorial effects of interest is $C_d = B\Lambda B'$ (see, \cite{r7} for details). A design $d$ is connected if the corresponding information matrix $C_d$ ($=C$ say) is positive definite. A  connected design allows the estimation of all underlying factorial effects of interest. Let $\mathcal{D}_{N,n,m, \rho}$ be the class of all connected designs with $N$ choice sets each of size $m$ with $n$ factors and $\rho$ strength. 

We now characterize the $C$-matrix which helps us for the further development of this paper. Note from \cite{r7} that for a design $d \in \mathcal{D}_{N,n,m, \rho} $, the $2^n\times 2^n$ information matrix $\Lambda = ((\lambda_{st}))$ of the treatment effects for equally attractive options is 

\begin{equation*}\label{}
\lambda_{st} = \left\{ \begin{array}{ccl}
((m-1)/N m^2) \sum_{j_2 < j_3 < \cdots < j_m} N_{j_1 j_2 \ldots j_m} & \text{if} & s = t = j_1 \\

 (-1/N m^2) \sum_{j_3 < j_4 < \cdots < j_m} N_{j_1 j_2 \ldots j_m} &  \text{if} &  s =j_1, t = j_2 \\

0 &   \text{otherwise}, &
\end{array}\right.
\end{equation*}
where, $N_{j_1 j_2 \ldots j_m}$ is the indicator function taking value 1 if $(T_{j_1}, T_{j_2},\ldots, T_{j_m}) \in d$ and 0 otherwise.

Let $M^{(j_{1}j_{2}\ldots j_{m})} = ((m_{st}))$ be a $2^n\times 2^n$ matrix corresponding to a choice set $(T_{j_{1}},$ $T_{j_{2}},$ $\ldots, T_{j_{m}})$, where,
\vspace{-.3cm}
$$ m_{st} = \left\{ \begin{array}{ll}
m-1 & \text{if} \hspace{.5cm} s = t, \;t \in \{ j_{1},j_{2},\ldots, j_{m} \} \vspace{-.3cm} \\
-1 & \text{if} \hspace{.5cm} s\neq t, (s,t) \in \{j_{1},j_{2},\ldots, j_{m}\} \vspace{-.3cm} \\
0 & \text{otherwise.}
\end{array}\right.$$
Then for any choice design $d$, we can write
\begin{equation*} 
\Lambda = (1/N m^2)\sum_{j_{1}<j_{2}< \cdots <j_{m}} N_{j_1 j_2 \ldots j_m}M^{(j_1 j_2 \ldots j_m)} = (1/N m^2) \Lambda^* \hspace{.1cm} (say).
\end{equation*}
We consider the matrix $M^{(j_{1}j_{2}\ldots j_{m})}$ as the contribution of the choice set $(T_{j_{1}},$ $T_{j_{2}},$ $\ldots, T_{j_{m}})$ to $\Lambda$.
The definition of $M^{(j_{1}j_{2}\ldots j_{m})}$ suggests that we can write
$M^{(j_{1}j_{2}\ldots j_{m})} = \sum_{j_{r}<j_{r'}}M^{(j_{r}j_{r'})}$,
where, $j_{r},j_{r'} \in \{j_{1},j_{2},\ldots ,j_{m}\}$. 
Thus, the contribution of the choice set $(T_{j_{1}},$ $T_{j_{2}},$ $\ldots, T_{j_{m}})$ to $\Lambda$ is the sum of the contributions of all its $m(m-1)/2$ component pairs $(T_{j_{r}}, T_{j_{r'}})$. Therefore, the information matrix for the factorial effects of interest is  
\begin{eqnarray}\label{BLB2}
C = B\Lambda B'
&= (1/N m^2) \sum_{j_{1}<j_{2}< \cdots <j_{m}} N_{j_1 j_2 \ldots j_m}\left\{B \left(\sum_{j_{r}<j_{r'}}M^{(j_{r}j_{r'})}\right) B'\right\}.
\end{eqnarray}
\noindent Let $f_1,f_2,\ldots,f_n$, denote the $n$ factors. Let $\theta_1 = (F_1, F_2, \ldots, F_n)'$ denotes $n$ main effects and $ \theta_2 =  (F_{12},F_{13},\ldots, F_{(n-1)n})'$ denotes $n(n-1)/2$ two factor interaction effects. We define the positional value of the  $h$-th factor $f_h$ in $T_i$ as $i_h$. Corresponding to a pair $(T_i, T_j)$, we define the positional value of $F_h$ and $F_{k}$ in $(T_i, T_j)$ as $(i_hi_k,j_hj_k)_{hk}$. Similarly, the positional value of  $F_h$ and $F_{hk}$ in $(T_i, T_j)$ as $(i_hi_k,j_hj_k)_{(hk)}$ and the positional value of  $F_h$ and $F_{kl}$ in $(T_i, T_j)$ as $(i_h(i_k i_l), j_h(j_k j_l))_{h(kl)}$.

\begin{lemma}\label{lm:C_hkVal-3}
	Let $B_{x} = (x_{1},\ldots, x_{i},\ldots, x_{j},\ldots,x_{2^n})$ and $B_{y} = (y_{1},\ldots, y_{i},\ldots, $ $y_{j},\ldots, y_{2^n})$ be any two contrast vectors with $2^n$ elements. Then for a given {\it component pair} $(T_i,T_j)$, the value of $B_{x}M^{(ij)}B'_{y}=(x_i-x_j)(y_i -y_j)$.
\end{lemma}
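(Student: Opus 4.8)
The plan is to exploit the fact that, for a single component pair $(T_i,T_j)$, the matrix $M^{(ij)}$ is extremely sparse. Specializing the general definition of $M^{(j_1\ldots j_m)}$ to $m=2$ and to the index set $\{i,j\}$ (with $i\neq j$), every entry vanishes except $m_{ii}=m_{jj}=m-1=1$ and $m_{ij}=m_{ji}=-1$. Writing $e_k$ for the $2^n\times 1$ indicator vector with a $1$ in position $k$ and $0$ elsewhere, these four entries are precisely those of the outer product $(e_i-e_j)(e_i-e_j)'$, so the first step is to record the rank-one factorization
\begin{equation*}
M^{(ij)} = (e_i-e_j)(e_i-e_j)'.
\end{equation*}

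With this factorization in hand, the second step is to insert it into $B_xM^{(ij)}B_y'$ and regroup the product by associativity:
\begin{equation*}
B_xM^{(ij)}B_y' = \bigl(B_x(e_i-e_j)\bigr)\bigl((e_i-e_j)'B_y'\bigr).
\end{equation*}
Each factor is now a scalar. Since $B_xe_k$ simply selects the $k$-th coordinate $x_k$ of the row vector $B_x$, we get $B_x(e_i-e_j)=x_i-x_j$; likewise $(e_i-e_j)'B_y'=y_i-y_j$. Multiplying the two scalars yields $(x_i-x_j)(y_i-y_j)$, which is exactly the asserted value.

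There is no substantive obstacle here, since the statement is essentially a bookkeeping identity; the only points requiring care are (i) confirming that the component-pair reading of $M$ is precisely the $m=2$ specialization, so that the diagonal entries equal $1$ rather than $m-1$, and (ii) keeping the transpose on the correct factor when the row vectors $B_x,B_y$ meet the column vector $B_y'$. If one prefers to avoid introducing $e_k$, an alternative is to compute $B_xM^{(ij)}$ directly: all of its coordinates vanish except the $i$-th, which equals $x_i-x_j$, and the $j$-th, which equals $-(x_i-x_j)$; pairing this row against $B_y'$ then gives $(x_i-x_j)y_i-(x_i-x_j)y_j=(x_i-x_j)(y_i-y_j)$. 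Either path reduces the lemma to a two-line verification.
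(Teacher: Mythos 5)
Your proposal is correct. Your primary route differs mildly from the paper's: you factor $M^{(ij)}=(e_i-e_j)(e_i-e_j)'$ as a rank-one outer product and then regroup $B_xM^{(ij)}B_y'$ into a product of two scalars, whereas the paper computes the row vector $B_xM^{(ij)}$ entrywise (all zeros except $x_i-x_j$ in position $i$ and $-(x_i-x_j)$ in position $j$) and pairs it against $B_y'$ --- which is exactly the ``alternative path'' you sketch at the end, so your fallback coincides with the paper's proof verbatim. The factorization route is slightly more structural: it makes transparent at a glance that $M^{(ij)}$ is positive semidefinite of rank one, and it would generalize cleanly, e.g.\ to writing the whole of $\Lambda^*$ as a sum of outer products, though the paper never needs this. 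Your point (i), checking that the component-pair matrix is the $m=2$ specialization with diagonal entries $m-1=1$, is the one detail the paper states without comment, and you handle it correctly. Both arguments are complete two-line verifications; there is no gap.
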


\begin{proof}
	Note that $M^{(ij)}$ is a $ 2^n\times 2^n$ matrix with all elements $0$ except $ M_{ii}^{(ij)} =  M_{jj}^{(ij)} = 1$ and $ M_{ij}^{(ij)} =  M_{ji}^{(ij)} = -1$. Therefore, $B_{x}M^{(ij)}B'_{y} = \left(0,\ldots,(x_{i}-x_{j}),\ldots, -(x_{i}-x_{j}) ,\ldots,0\right)B'_{y}$ \\
	$= (x_{i}-x_{j})y_{i} -(x_{i}-x_{j})y_j = (x_{i}-x_{j})(y_{i}-y_j)$.
\end{proof}

Let $B_{(1)}=((b_{hi}))$ be the orthogonal contrast matrix of $\theta_1$. Corresponding to a treatment 
$T_i$ and factorial effect $F_h$, let $b_{hi}=-1$ if $i_h=0$ and $b_{hi}=1$ 
if $i_h=1$. Let $B_{(2)}=((b_{kli}))$ be the orthogonal contrast matrix of $\theta_2$. Corresponding to a 
treatment $T_i$ and factorial effect $F_{kl}$, let $b_{kli} = b_{ki}b_{li}$. It is assumed that the 
treatments are arranged in lexicographic order in $B_{(1)}$ and $B_{(2)}$. Therefore, the contrast vector corresponding to the main effect $F_h$ of $B_{(1)}$ is $B_{h} = \otimes_{i =1}^{h-1} 
\left(1 \hspace{0.2cm} 1\right) \otimes \left(-1 \hspace{0.2cm} 1\right)
\otimes_{i =h+1}^{n} \left(1 \hspace{0.2cm} 1\right),$ 
and the contrast vector corresponding to the two factor interaction effects $F_{kl}$ of $B_{(2)}$ is
$ B_{kl} = \otimes_{i =1}^{k-1} \left(1 \hspace{0.2cm} 1\right) \otimes \left(-1 \hspace{0.2cm} 1\right)
\otimes_{i =k+1}^{l-1} \left(1 \hspace{0.2cm} 1\right) \otimes \left(-1 \hspace{0.2cm} 1\right)
\otimes_{i =l+1}^{n} \left(1 \hspace{0.2cm} 1\right)
$. Here, $\otimes$ denotes kronecker product. Then, we have the following result, proof of which follows simply by definitions.  

\begin{lemma}\label{lemma:converse-3}
	For $h \neq k \neq l,(h,k,l) \in \{1,\ldots , n\}$ and corresponding to the component pair $(T_i, T_j)$, the exhaustive cases indicating possible values of $B_{h}M^{(ij)}B'_{k}$, $B_{h}M^{(ij)}B'_{hk}$ and $B_{h}M^{(ij)}B'_{kl}$  are\\
	
	\noindent Case 1: For $F_h$ and $F_k$, \\
	a) $B_{h}M^{(ij)}B'_{k}=-4$ when
	$(i_{h}i_{k},j_{h}j_{k})_{hk} \equiv (01,10)_{hk}$\\
	b) $B_{h}M^{(ij)}B'_{k}=4$ when $(i_{h}i_{k},j_{h}j_{k})_{hk} \equiv (00,11)_{hk}$\\
	c) $B_{h}M^{(ij)}B'_{k}=0$ for all other situations. \\
	
	\noindent Case 2:  For $F_h$ and $F_{hk}$,\\
	a) $B_{h}M^{(ij)}B'_{hk}=4$ when $(i_{h}i_{k},j_{h}j_{k})_{(hk)} \equiv (01,11)_{(hk)}$\\
	b) $B_{h}M^{(ij)}B'_{hk}=-4$ when
	$(i_{h}i_{k},j_{h}j_{k})_{(hk)} \equiv (00,10)_{(hk)}$\\
	c) $B_{h}M^{(ij)}B'_{hk}=0$ for all other situations. \\
	
	\noindent Case 3:  For $F_h$ and $F_{kl}$,  \\
	a) $B_{h}M^{(ij)}B'_{kl}=4$ when \\
	$(i_{h}(i_{k}i_{l}), j_{h}(j_{k}j_{l}))_{h(kl)} \in \{(0(10),1(00)), (0(10),1(11)), (0(01),1(00)), (0(01),1(11))\} _{h(kl)}$\\
	b) $B_{h}M^{(ij)}B'_{kl}=-4$ when \\
	$(i_{h}(i_{k}i_{l}), j_{h}(j_{k}j_{l}))_{h(kl)} \in \{(0(00),1(10)), (0(00),1(01)), (0(11),1(10)), (0(11),1(01))\} _{h(kl)}$\\
	c) $B_{h}M^{(ij)}B'_{kl}=0$ for all other situations.
\end{lemma}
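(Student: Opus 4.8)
The single tool needed is Lemma~\ref{lm:C_hkVal-3}: for any two contrast vectors $B_x,B_y$ one has $B_xM^{(ij)}B'_y=(x_i-x_j)(y_i-y_j)$. Thus each of the three quantities is determined solely by the $i$-th and $j$-th coordinates of the two contrast vectors involved, and the whole lemma reduces to reading off those coordinates from the definitions of $B_h$, $B_{hk}$, $B_{kl}$ and multiplying the two resulting differences. The plan is to treat the three cases in turn, in each case expressing the two relevant coordinates through the bit values $i_h,j_h,i_k,j_k,\dots$ and then enumerating the sign patterns that produce a nonzero product.

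First I would record two elementary facts from the contrast definitions. For a main-effect contrast the $i$-th coordinate of $B_h$ is $b_{hi}=2i_h-1\in\{-1,1\}$, so $b_{hi}-b_{hj}=2(i_h-j_h)\in\{0,\pm2\}$, vanishing exactly when $i_h=j_h$. For an interaction contrast the $i$-th coordinate of $B_{kl}$ is the product $b_{ki}b_{li}$, which equals $+1$ when the bits $i_k,i_l$ agree and $-1$ when they disagree (and similarly for $B_{hk}$ via $b_{hi}b_{ki}$). Because $M^{(ij)}=M^{(ji)}$, each quantity is symmetric under swapping the two profiles, so I may always assume the canonical ordering $i_h=0,\ j_h=1$; if instead $i_h=j_h$ then $b_{hi}-b_{hj}=0$ and all three quantities are $0$, which is subsumed in the ``all other situations'' clauses.

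With these facts, Case~1 is immediate: $B_hM^{(ij)}B'_k=(b_{hi}-b_{hj})(b_{ki}-b_{kj})=4(i_h-j_h)(i_k-j_k)$, which under $i_h=0,j_h=1$ equals $-4(i_k-j_k)$, giving $-4$ for $(i_k,j_k)=(1,0)$, $+4$ for $(0,1)$ and $0$ when $i_k=j_k$, i.e.\ parts (a)--(c). For Cases~2 and~3 the only change is that the second factor becomes a difference of products, $b_{hi}b_{ki}-b_{hj}b_{kj}$ (resp.\ $b_{ki}b_{li}-b_{kj}b_{lj}$). Substituting $i_h=0,j_h=1$ so that $b_{hi}-b_{hj}=-2$, I would expand this difference over the four sign patterns of the remaining bits, noting that it takes the values $+2$, $-2$ or $0$ according to whether the $T_i$-product and $T_j$-product disagree in the two possible ways or agree; multiplying by $-2$ yields the tabulated $\pm4$ and $0$, and translating the surviving patterns back into positional-value notation reproduces exactly the lists in (a) and (b), with everything else falling into (c).

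There is no genuine obstacle here: the content is entirely computational and, as the text notes, follows from the definitions. The only points requiring care are the bookkeeping of sign patterns in Cases~2 and~3, where the difference of products must be checked for each configuration of $i_k,j_k$ (and $i_l,j_l$), and the consistent use of the symmetry $M^{(ij)}=M^{(ji)}$ to justify listing a single canonical representative with $i_h=0,j_h=1$ for each unordered pair rather than both orderings.
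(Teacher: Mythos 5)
Your proposal is correct and follows exactly the route the paper intends: the paper gives no written proof beyond stating that the lemma ``follows simply by definitions,'' and your argument is precisely that verification, applying Lemma~\ref{lm:C_hkVal-3} with $b_{hi}=2i_h-1$, $b_{kli}=b_{ki}b_{li}$ and enumerating the sign patterns (your Case~3 bookkeeping, $i_k\neq i_l$ with $j_k=j_l$ giving $+4$ and the reverse giving $-4$, matches the tabulated lists). The use of the symmetry $M^{(ij)}=M^{(ji)}$ to reduce to the canonical ordering $i_h=0,\ j_h=1$ is a clean way to handle the $\equiv$ in the statement and is consistent with the paper's conventions.
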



Note that each choice set $S_{pm}$ of a design $d \in \mathcal{D}_{N,n,m,\rho}$, contains $m(m-1)/2$ component pairs $(T_i,T_j)$ and there are $N$ such choice sets in $d$. Therefore, total number of component pairs in a design $d$ is $N^* = N m(m-1)/2$.

We use the universal optimality criteria for finding optimal designs in $\mathcal{D}$. Following Kiefer (1975), a choice design $d^*$ is universally optimal in $\mathcal{D}$ if $C_{d^*}$ is scalar multiple of identity matrix and $ trace(C_{d^*}) \geq trace(C_{d})$ for any other $d \in \mathcal{D}$. If a design $d$ is universally optimal, then it is also $A$-, $D$-, and $E$-optimal. Henceforth in this paper, by optimal design, we mean universally optimal design. 


\section{Optimal designs under the main effects model}
In this section, we discuss optimal designs for main effects of interest under the main effects model. We first obtain optimal designs for choice sets of size $m=2$, and then develop optimal designs for general choice sets of size $m$. The information matrix for estimating main effects of interest corresponding to a design $d \in \mathcal{D}_{N,n,m,\rho}$, is $C =  (1/2^n) B_{(1)}\Lambda B'_{(1)} = (1/2^n N m^2) B_{(1)}\Lambda^* B'_{(1)}$. For the purpose of this section, we define two more notations here. Let $\eta_{hk}^{1+}$ and $\eta_{hk}^{1-}$ be the total number of component pairs of the type $(00,11)_{hk}$ and $(01,10)_{hk}$ respectively for a design $d \in \mathcal{D}$.  

\begin{lemma}\label{lem1}
	For $h\neq k, (h,k)\in \{1,\ldots , n\}$, the $(h,k)$-th element of $C$ matrix is zero if and only if $\eta_{hk}^{1+} = \eta_{hk}^{1-}$.
\end{lemma}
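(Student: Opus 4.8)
The plan is to compute the $(h,k)$-th entry of the $C$ matrix explicitly by summing the contributions of all component pairs in the design, and then read off the stated equivalence. Since $C = (1/2^n N m^2) B_{(1)} \Lambda^* B_{(1)}'$ and $\Lambda^* = \sum N_{j_1 \ldots j_m} M^{(j_1 \ldots j_m)}$ decomposes into a sum over all component pairs $(T_i, T_j)$ via the identity $M^{(j_1 \ldots j_m)} = \sum_{j_r < j_{r'}} M^{(j_r j_{r'})}$, the $(h,k)$-th entry of $C$ is proportional to $\sum_{(T_i,T_j)} B_h M^{(ij)} B_k'$, where the sum runs over all $N^* = N m(m-1)/2$ component pairs in $d$.

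First I would invoke Case 1 of Lemma \ref{lemma:converse-3}, which enumerates the only three possibilities for $B_h M^{(ij)} B_k'$ when $h \neq k$: the value is $+4$ precisely for pairs of positional type $(00,11)_{hk}$, the value is $-4$ precisely for pairs of type $(01,10)_{hk}$, and it is $0$ in every other case. Grouping the component pairs by these types and using the definitions of $\eta_{hk}^{1+}$ and $\eta_{hk}^{1-}$ as the counts of pairs of type $(00,11)_{hk}$ and $(01,10)_{hk}$ respectively, the summation collapses to $4\,\eta_{hk}^{1+} - 4\,\eta_{hk}^{1-}$. Hence the $(h,k)$-th element of $C$ equals $(4/2^n N m^2)\bigl(\eta_{hk}^{1+} - \eta_{hk}^{1-}\bigr)$, and all contributions from pairs of other positional types vanish identically.

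From this closed form the conclusion is immediate in both directions: since the scalar prefactor $4/(2^n N m^2)$ is strictly positive, the $(h,k)$-th entry is zero if and only if $\eta_{hk}^{1+} - \eta_{hk}^{1-} = 0$, i.e. if and only if $\eta_{hk}^{1+} = \eta_{hk}^{1-}$. I do not anticipate a genuine obstacle here, as the result is essentially a bookkeeping consequence of the per-pair evaluation already supplied by Lemma \ref{lemma:converse-3}. The only point requiring mild care is the correct handling of the decomposition $M^{(j_1 \ldots j_m)} = \sum_{j_r < j_{r'}} M^{(j_r j_{r'})}$ and the observation that $B_h M^{(ij)} B_k'$ depends solely on the positional values of $F_h$ and $F_k$ in the pair, so that component pairs contributing neither $+4$ nor $-4$ can be discarded without affecting the count. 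This reduces the entire statement to verifying that the net signed count of the two relevant pair-types governs the entry, which is exactly the asserted equivalence.
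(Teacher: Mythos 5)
Your proposal is correct and follows essentially the same route as the paper's own proof: decompose the entry of $C$ into the per-component-pair contributions via (\ref{BLB2}), apply Case 1 of Lemma \ref{lemma:converse-3} to collapse the sum to $4(\eta_{hk}^{1+}-\eta_{hk}^{1-})$, and conclude from the positive scalar prefactor. No gaps; the argument matches the paper's computation of $c'_{hk}$ step for step.
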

\begin{proof}
	Let $c_{hk}$ and $c'_{hk}$ denote the $(h,k)$-th element of $C$ and $ B_{(1)}\Lambda^* B'_{(1)}$ respectively. Then it follows from {\it Case 1} of Lemma \ref{lemma:converse-3} that
	\begin{eqnarray*}\label{Cstarxx}
		c'_{hk} &=& \sum_{j_{1}<j_{2}< \cdots <j_{m}} N_{j_1 j_2 \ldots j_m}\sum_{j_{r}<j_{r'}}\{B_hM^{(j_{r}j_{r'})} B_{k}'\}\\
		&=& \left[4(\eta_{hk}^{1+} - \eta_{hk}^{1-})+0\{N^* - (\eta_{hk}^{1+} + \eta_{hk}^{1-}) \}\right].
	\end{eqnarray*}
	Thus $c'_{hk}$ or $c_{hk}=0$ if and only if $\eta_{hk}^{1+} = \eta_{hk}^{1-}$.
\end{proof}
We now find the expression of maximum $trace(C)$ for a design $d$ in $\mathcal{D}_{N,n,m,\rho}$.  

\begin{lemma}\label{CtrOpti_m}
	Let $n_{ph}$ = total number of 0's corresponding to the $h$-th positional value of all treatments in the choice set $S_{pm}$, $p=1,2,\ldots,N$, $h=1,2,\ldots,n$. Let $d$ be a design in  $\mathcal{D}_{N,n,m,\rho}$, then,
	$$ max(trace(C)) = \left\{ \begin{array}{cc}
	\rho / 2^n & \text{for $m$ even} \vspace{.2cm}\\
	\rho(m^2-1)/(2^nm^2) & \text{for $m$ odd,} \\
	\end{array}\right.$$
	and the maximum $trace (C)$ occurs when $n_{ph} = m/2$ ($m$ even) and 	$ n_{ph} = (m-1)/2$  or  $(m+1)/2$ ($m$ odd), for every active factor $f_h$ of every choice set $S_{pm}$.
\end{lemma}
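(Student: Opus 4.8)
The plan is to compute $trace(C)$ directly by evaluating each diagonal entry $c_{hh}$ of $C$ separately and then maximizing the resulting expression. First I would apply Lemma \ref{lm:C_hkVal-3} with $B_x=B_y=B_h$, which gives $B_hM^{(ij)}B_h'=(b_{hi}-b_{hj})^2$ for each component pair $(T_i,T_j)$. Since $b_{hi}\in\{-1,1\}$ encodes the positional value of the factor $f_h$ in $T_i$, this square equals $4$ exactly when $T_i$ and $T_j$ disagree in position $h$ and equals $0$ otherwise. Summing over all component pairs of $d$ through the decomposition $\Lambda^*=\sum_{j_1<\cdots<j_m} N_{j_1\ldots j_m}\sum_{j_r<j_{r'}}M^{(j_rj_{r'})}$, I obtain $B_h\Lambda^*B_h'=4\delta_h$, where $\delta_h$ denotes the number of component pairs of $d$ whose two treatments differ in the $h$-th position.

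Next I would count $\delta_h$ choice set by choice set. In a fixed choice set $S_{pm}$, exactly $n_{ph}$ of the $m$ treatments carry a $0$ in position $h$ and the remaining $m-n_{ph}$ carry a $1$, so the number of its component pairs disagreeing in factor $h$ is $n_{ph}(m-n_{ph})$. Hence $\delta_h=\sum_{p=1}^{N}n_{ph}(m-n_{ph})$, and combining with $c_{hh}=(1/2^nNm^2)B_h\Lambda^*B_h'$ and summing over $h$ gives
\begin{equation*}
trace(C)=\frac{4}{2^nNm^2}\sum_{p=1}^{N}\sum_{h=1}^{n}n_{ph}(m-n_{ph}).
\end{equation*}

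It then remains to maximize this double sum, and here the profile strength supplies the binding constraint. Within choice set $S_{pm}$ a factor held at a constant level has $n_{ph}\in\{0,m\}$ and contributes $0$, whereas an active factor has $1\le n_{ph}\le m-1$; since membership in $\mathcal{D}_{N,n,m,\rho}$ forces at least $n-\rho$ constant factors per choice set, at most $\rho$ of the summands over $h$ are nonzero. The integer quadratic $n_{ph}(m-n_{ph})$ is maximized at $n_{ph}=m/2$ with value $m^2/4$ when $m$ is even, and at $n_{ph}=(m-1)/2$ or $(m+1)/2$ with value $(m^2-1)/4$ when $m$ is odd. Bounding the inner sum by $\rho$ times this per-factor maximum yields $\sum_{p}\sum_{h}n_{ph}(m-n_{ph})\le N\rho\,m^2/4$ for $m$ even and $\le N\rho(m^2-1)/4$ for $m$ odd; substituting into the trace formula gives the claimed values $\rho/2^n$ and $\rho(m^2-1)/(2^nm^2)$. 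Equality holds precisely when every choice set attains its full allowance of $\rho$ active factors, each split as evenly as possible, i.e. $n_{ph}=m/2$ ($m$ even) or $n_{ph}=(m-1)/2$ or $(m+1)/2$ ($m$ odd), which is the stated maximizer.

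The main obstacle is conceptual rather than computational: one must see that $trace(C)$ collapses into the single sum $\sum_{p,h}n_{ph}(m-n_{ph})$ of independent contributions, one from each factor within each choice set, so that the global optimization decouples into the elementary problem of splitting each active column of each choice set as evenly as possible, subject only to the cap of $\rho$ active factors imposed by the profile strength. Once the pair contribution is pinned down via Lemma \ref{lm:C_hkVal-3}, the remaining work, namely identifying disagreeing pairs with $n_{ph}(m-n_{ph})$ and optimizing a one-variable integer quadratic, is routine.
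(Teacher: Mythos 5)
Your proof is correct and takes essentially the same route as the paper's: both use Lemma \ref{lm:C_hkVal-3} with $B_x=B_y=B_h$ to show each component pair contributes $4$ to $c'_{hh}$ exactly when the treatments disagree in position $h$, identify the per-choice-set contribution as $4n_{ph}(m-n_{ph})$, and maximize this integer quadratic over the at most $\rho$ active factors per choice set. Your write-up is in fact slightly more explicit than the paper's, spelling out the full trace formula $trace(C)=\bigl(4/(2^nNm^2)\bigr)\sum_{p}\sum_{h}n_{ph}(m-n_{ph})$ and noting that constant factors ($n_{ph}\in\{0,m\}$) contribute zero, but the underlying argument is the same.
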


\begin{proof}
	Let $ c'_{hh}$ be the $(h,h)$-th element of  $B_{(1)}\Lambda^* B'_{(1)}$. Every component pair $(T_i,T_j)$ adds a value $B_h M^{(ij)}B_{h}'$ at $c'_{hh}$. From Lemma \ref{lm:C_hkVal-3}, it follows that this value is 4 if and only if $i_h \neq j_h$. Note from (\ref{BLB2}) that the contribution of a choice set $S_{pm}$ to $c'_{hh}$ is equal to the sum of contributions of all its $m(m-1)/2$ component pairs. Now, if $f_h$ is an active factor in $S_{pm}$, then there are $n_{ph}(m-n_{ph})$ component pairs $(T_i,T_j)$ in $S_{pm}$ for which $i_h \neq j_h$. Thus every choice set $S_{pm}$ adds a value $4n_{ph}(m-n_{ph})$ to $c'_{hh}$ and this value is maximum 
	when (i) $n_{ph}=m/2$ (for $m$ even) and (ii) $n_{ph}=(m-1)/2$ or $n_{ph}=(m+1)/2$ (for $m$ odd). Considering the fact that there are $\rho$ active factors in each choice set $S_{pm}$ and there are $N$ such choice sets in $d$, we have the required expression for  $max(trace(C))$. 
\end{proof}

From Lemma \ref{lem1} and Lemma \ref{CtrOpti_m}, it follows that under the main effects model a design $d$ is  optimal in  $\mathcal{D}_{N,n,m,\rho}$, if (a) $C$ is diagonal, i.e., $\eta_{hk}^{1+} = \eta_{hk}^{1-}$, for all $ h\neq k, h,k \in \{1,2,\ldots,n\}$ and (b) $ trace(C)$ is maximum, i.e.,  $n_{ph} = m/2$ ( $m$ even) and 	$n_{ph} = (m-1)/2$  or  $(m+1)/2$ ($m$ odd), for every active factor position $h$ and for every choice set $S_{pm}$,   $p=1,2,\ldots,N$, $h=1,2,\ldots,n$. \\

We first discuss optimal designs for paired choice experiments and then extend those results for choice sets of size $m$. Let $X$ be the difference matrix of a paired choice design $d = (A_1, A_2)$, i.e., 
$ X = A_1 - A_2 $. Therefore, the entries of $X$ are $0, 1$ and $-1$. Note that if we know $X$, we can easily construct a design $d = (A_1,A_2)$ and vice-versa. Following \cite{r3}, \cite{r4}, \cite{r6}, we have $C = (1/N2^n) X'X$. Thus, it follows that for a connected design $N \geq n$. A design $d \in \mathcal{D}_{N, n,2, \rho} $ is said to be saturated or tight if $N = n$. We have the following lemma for optimal designs.

\begin{lemma}\label{optimalG}
	A design $d$ is optimal in $\mathcal{D}_{ N,n,2, \rho}$, if $X'X = N\rho/n I_n$. 
\end{lemma}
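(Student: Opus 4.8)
The plan is to verify directly the two-part sufficient condition for universal optimality established in the discussion immediately following Lemma \ref{CtrOpti_m}: a design in $\mathcal{D}_{N,n,2,\rho}$ is optimal provided (a) its information matrix $C$ is diagonal, equivalently $\eta_{hk}^{1+}=\eta_{hk}^{1-}$ for all $h\neq k$, and (b) $trace(C)$ attains the maximum value given by Lemma \ref{CtrOpti_m}. The whole argument rests on the closed form $C=(1/N2^n)X'X$ recorded just before the statement, into which I would simply substitute the hypothesis $X'X=(N\rho/n)I_n$. This yields at once $C=(\rho/(n2^n))\,I_n$, so almost everything follows from reading off this single identity.

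First I would dispatch condition (a). Since $C=(\rho/(n2^n))I_n$ is a scalar multiple of the identity, it is in particular diagonal, so every off-diagonal entry $c_{hk}$ vanishes; by Lemma \ref{lem1} this is equivalent to $\eta_{hk}^{1+}=\eta_{hk}^{1-}$ for all $h\neq k$, $(h,k)\in\{1,\ldots,n\}$. The same identity also shows $C$ is positive definite (as $\rho>0$), which certifies that $d$ is connected and hence genuinely lies in $\mathcal{D}_{N,n,2,\rho}$, so that comparison within the class is legitimate. Next I would check condition (b) by computing $trace(C)=n\cdot\rho/(n2^n)=\rho/2^n$ and comparing it with Lemma \ref{CtrOpti_m}. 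For $m=2$ (even), that lemma gives the maximal trace as $\rho/2^n$, so the design meets the bound exactly; consequently $trace(C_d)\geq trace(C_{d'})$ for every competing $d'\in\mathcal{D}_{N,n,2,\rho}$. With $C$ a scalar multiple of the identity and its trace maximal, Kiefer's criterion (as invoked in the excerpt) delivers universal optimality.

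The only point requiring a moment's care is the internal consistency between the $X$-form of $C$ and the counting in Lemma \ref{CtrOpti_m}, i.e.\ confirming that the value $\rho/2^n$ produced by $X'X=(N\rho/n)I_n$ really is the global maximum over the class rather than merely a value attained. I would note that for $m=2$ a factor $f_h$ is active in a choice set precisely when the two treatments differ in position $h$, in which case $n_{ph}=1=m/2$ automatically; thus the per-choice-set, per-active-factor contribution $4\,n_{ph}(m-n_{ph})$ is already at its maximal value of $4$, and the diagonal entry $(X'X)_{hh}$ counts exactly the number of choice sets in which $f_h$ is active. The hypothesis forces each such count to equal $N\rho/n$, which is consistent with each of the $N$ choice sets carrying exactly $\rho$ active factors, and this is what ties the matrix identity to the trace bound of Lemma \ref{CtrOpti_m}. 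This bookkeeping check is the main (and essentially the only) obstacle; once it is in place the conclusion is immediate.
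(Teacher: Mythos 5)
Your proposal is correct and matches the paper's own (implicit) justification: the paper states this lemma without a separate proof precisely because, as you do, one substitutes the hypothesis into $C=(1/N2^n)X'X$ to get $C=(\rho/(n2^n))I_n$, then invokes the two-part criterion --- diagonality via Lemma \ref{lem1} and maximal trace $\rho/2^n$ via Lemma \ref{CtrOpti_m} with $m=2$ --- to conclude universal optimality in Kiefer's sense. Your added bookkeeping check (that for $m=2$ an active factor automatically gives $n_{ph}=1=m/2$, so $(X'X)_{hh}$ counts the choice sets in which $f_h$ is active and the hypothesis forces exactly $\rho$ active factors per set) is a sound and slightly more careful spelling-out of the same argument.
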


Note that a large number of choice sets can create cognitive burden to the respondents. Therefore, a practical number of choice sets are required for an optimal choice design. \cite{r8} have suggested possible use of weighing matrices to construct good choice designs. Later \cite{r31} have used weighing matrices for the construction of partial profile paired choice designs when there are two groups of factors.

A weighing matrix $W = W(n,\rho)$ of order $n$ and weight $\rho$ is an $n \times n $ orthogonal matrix, with entries $0$, $+1$ and $-1$ such that there are $\rho$ non-zero entries in each row and each column, i.e., $ WW' = W'W = \rho I_n$. 

 We now have the following result for saturated optimal design $d$ in $\mathcal{D}_{n, n,2,\rho}$.  
\begin{theorem}\label{SaturateTh}
	If  there exists a weighing matrix $W(n,\rho)$, then there exists an optimal design $d$ in $\mathcal{D}_{n, n,2,\rho}$.
\end{theorem}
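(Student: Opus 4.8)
The plan is to realize the weighing matrix $W = W(n,\rho)$ directly as the difference matrix of a saturated paired choice design by simply setting $X = W$. The key observation is that the three admissible entries of a difference matrix, namely $0, +1, -1$, are exactly the admissible entries of a weighing matrix, so $W$ is already a legitimate difference matrix of order $n$. Since $W$ is $n \times n$, this produces a paired design with $N = n$ choice sets, i.e.\ a saturated design, as required by the statement.

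First I would reconstruct the explicit design $d = (A_1, A_2)$ from $X = W$ via the correspondence already noted in the excerpt (knowing $X$, one recovers $A_1, A_2$): an entry $+1$ gives the level pair $(1,0)$, an entry $-1$ gives $(0,1)$, and an entry $0$ gives a common level in both profiles, i.e.\ an inactive factor in that choice set. Next I would check the profile strength. By definition of a weighing matrix, every row of $W$ has exactly $\rho$ nonzero entries, so in each choice set exactly $\rho$ factors differ between the two profiles and the remaining $n-\rho$ factors are held constant; hence $d$ has profile strength $\rho$. It then remains to evaluate the Gram matrix: since $X = W$ and $W$ has weight $\rho$, we obtain $X'X = W'W = \rho I_n$. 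This matrix is positive definite, so $C = (1/(n 2^n)) X'X$ is positive definite, $d$ is connected, and therefore $d \in \mathcal{D}_{n,n,2,\rho}$. Specialising Lemma \ref{optimalG} to $N = n$, the optimality condition $X'X = (N\rho/n) I_n$ reduces to precisely $X'X = \rho I_n$, which is exactly what $W$ supplies; hence $d$ is optimal.

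The argument is essentially a matter of matching structures, so I do not anticipate a genuinely hard computation. The one point requiring care is the bookkeeping that links the properties of $W$ to the two optimality requirements: the orthogonality $W'W = \rho I_n$ must be seen to force simultaneously (i) the off-diagonal entries of $C$ to vanish, so that $\eta_{hk}^{1+} = \eta_{hk}^{1-}$ in the sense of Lemma \ref{lem1} and $C$ is a scalar multiple of $I_n$, and (ii) each diagonal entry to equal $\rho$, so that $\operatorname{trace}(C) = \rho / 2^n$ attains the maximum of Lemma \ref{CtrOpti_m} for $m=2$ (with each active factor split as $n_{ph} = 1$). Once both consequences of $W'W = \rho I_n$ are made explicit, Kiefer's universal optimality criterion is satisfied and the theorem follows directly from Lemma \ref{optimalG}.
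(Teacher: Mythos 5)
Your proposal is correct and follows essentially the same route as the paper: take $X = W$, observe that $X'X = W'W = \rho I_n = (N\rho/n)I_n$ for $N = n$, and invoke Lemma \ref{optimalG}. The extra bookkeeping you supply (recovering $d=(A_1,A_2)$ from $X$, verifying profile strength $\rho$ from the row weights, and connectedness from positive definiteness) is left implicit in the paper but is consistent with it.
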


\begin{proof}
	Let $d$ be a design with $X = W$. Therefore, 
	$ X'X = W'W = \rho I_n$. Hence, $d$ is optimal in $\mathcal{D}_{n, n,2,\rho}$ by Lemma \ref{optimalG}. 
\end{proof}

\begin{exmp}
	Let $N=n = 2^3 = 8$ and $\rho = 5$, then the difference matrix and corresponding  optimal design $d$ in $\mathcal{D}_{8,8,2,5}$ is 
	\vspace{-.7cm}
	\begin{center}
		$$ X =  \left(\begin{array}{c}
		11111000   \vspace{-.3cm} \\ 
		1u1u0100  \vspace{-.3cm} \\
		11uu0010 \vspace{-.3cm} \\
		1uu10001 \vspace{-.3cm} \\
		u0001111 \vspace{-.3cm} \\
		0u001u1u \vspace{-.3cm} \\
		00u011uu \vspace{-.3cm} \\
		000u1uu1 \vspace{-.3cm} \\
		\end{array} \right) \hspace{.2cm}\text{and} \hspace{.2cm} 
		d = \begin{array}{cc}
		(11111\!*\!**, & 00000\!*\!**) \vspace{-.3cm} \\
		(1010\!*\!1\!*\!*, & 0101\!*\!0\!*\!*) \vspace{-.3cm} \\
		(1100\!*\!*1*, & 0011\!*\!*0*) \vspace{-.3cm} \\
		(1001\!*\!*\!*\!1, & 0110\!*\!*\!*\!0) \vspace{-.3cm} \\
		(0\!*\!*\!*\!1111, & 1\!*\!*\!*\!0000) \vspace{-.3cm} \\
		(*0\!*\!*1010, & *1\!*\!*0101) \vspace{-.3cm} \\
		(*\!*\!0\!*\!1100, & *\!*\!1\!*\!0011) \vspace{-.3cm} \\
		(*\!*\!*01001, & *\!*\!*10110) \vspace{-.3cm} \\
		\end{array}. $$
	\end{center} 
	Here by `` $u$'', we denote  $-1$ in $X$ and $*$ represents the fixed positions.
\end{exmp}

Note that the existence of a $ W(n,\rho)$ ensures a optimal  design in $ \mathcal{D}_{n,n,2,\rho}$. A  good source of existing weighing matrices are given in \cite{r1}, \cite{r2}. We now focus on the cases when $W(n,\rho)$ do not exist for given $n$ and $\rho$. In this case one can increase the number of choice sets $N$ ($>n$) to find an optimal design $d$. A general construction in this regards is given by \cite{r3} using incomplete block design and Hadamard matrix for any given $n$ and $\rho$. They first construct an incomplete block design with $n$ `treatments', $n/gcd(n,\rho)$ `blocks', `block size' $\rho$, and $\rho/gcd(n,\rho)$ `replications', where ``gcd" denotes the greatest common divisor. Let $M$ be the incident matrix for such a block design. Let $H$ be the Hadamard matrix of order $h(\rho)$, where $h(\rho)$ is the least number greater than or equal to $\rho$ such that a Hadamard matrix of order $h(\rho)$ exists.  Now putting different Hadamard columns of $H$ in place of each 1 and zero vector of order $h(\rho)$ in place of each 0 corresponding to the each row of $M$ generates a difference matrix $X$ (and corresponding design $d$) in  $N = n h(\rho)/gcd(n,\rho)$ choice sets. We call this construction method as {\it Method-H}. \\

We now provide a different construction using weighing matrices instead of Hadamard matrices. Our construction improves the existing results in some cases. Suppose there exists a weighing matrix $W(\nu, \rho)$, $n\geq \nu$. Now create an $N_0 \times n$ matrix $M^*$ with entries 0 and 1 such that each row sum is $\nu$ and each column sum is equal. We can construct such matrix $M^*$ in $N_0 = n/gcd (n,\nu)$ rows as follows. Take $\nu$ consecutive 1 from the first position of first row and rests are 0. Then take $\nu$ consecutive 1 from $(\nu+1)$-th position of second row and rests are 0 and so on. Now putting $\nu$ different columns of $W$ in place of each 1 and a zero vector of order $\nu$ in place of each 0 corresponding to the each row of $M^*$ gives a difference matrix $X$ in  $N = n \nu/gcd(n,\nu)$ choice sets such that $X'X = N \rho/n I_n$. Therefore, by Lemma \ref{optimalG}, the corresponding design $d$ is optimal in $\mathcal{D}_{N,n,2,\rho}$. We call this construction method as {\it Method-W}. Hence, we have the following result.      

\begin{theorem}\label{OptimalD} 
	If there exists a weighing matrix $W(\nu,\rho)$, then there exists a optimal design $d$ in $\mathcal{D}_{N,n,2,\rho}$, for every $n \geq \nu$ with $N = n \nu/gcd (n,\nu)$.  
\end{theorem}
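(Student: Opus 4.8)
The plan is to reduce the whole statement to Lemma \ref{optimalG}: since that lemma guarantees optimality in $\mathcal{D}_{N,n,2,\rho}$ whenever the difference matrix satisfies $X'X = (N\rho/n) I_n$, it suffices to take the matrix $X$ produced by \textit{Method-W} and verify this single identity. Recall that $X$ is obtained by replacing each $1$ in a row of $M^*$ by a column of $W(\nu,\rho)$ and each $0$ by a length-$\nu$ zero vector. Before computing, I would isolate the two structural facts that drive everything: each row of $M^*$ carries exactly $\nu$ ones, and every column of $M^*$ carries the \emph{same} number of ones.

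First I would establish the column-regularity of $M^*$. The ones in successive rows fill the cyclic blocks of length $\nu$ beginning at positions $0,\nu,2\nu,\ldots \pmod n$; these starting positions form the subgroup of $\mathbb{Z}_n$ generated by $\nu$, which has exactly $N_0 = n/\gcd(n,\nu)$ elements, namely the multiples of $\gcd(n,\nu)$. A fixed column $j$ is covered by a block precisely when its starting position lies in a window of $\nu$ consecutive residues, and because $\nu$ is a multiple of $\gcd(n,\nu)$, any such window contains exactly $\nu/\gcd(n,\nu)$ admissible starting positions. Hence every column of $M^*$ sums to $\nu/\gcd(n,\nu)$.

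Next I would compute $X'X$ via the block decomposition $X = (X_1',\ldots,X_{N_0}')'$, where $X_p$ is the $\nu\times n$ block arising from the $p$-th row of $M^*$, so that $X'X = \sum_{p=1}^{N_0} X_p'X_p$. In $X_p$ the columns indexed by the $\nu$ active positions of row $p$ are exactly the $\nu$ columns of $W$ in some order, and the remaining columns vanish. Using $W'W = \rho I_\nu$, the $(h,k)$-entry of $X_p'X_p$ equals $\rho$ when $h=k$ is active in row $p$ and $0$ otherwise: the off-diagonal entries die because distinct columns of a weighing matrix are orthogonal. Summing over $p$, the $(j,j)$-entry of $X'X$ is $\rho$ times the $j$-th column sum of $M^*$, that is $\rho\nu/\gcd(n,\nu)$, while all off-diagonal entries remain zero. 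Since $\rho\nu/\gcd(n,\nu) = N\rho/n$, this yields $X'X = (N\rho/n) I_n$, and Lemma \ref{optimalG} completes the proof.

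The genuinely delicate point is the column count in the second step: one must confirm that shifting the block of ones cyclically by $\nu$ distributes the ones of $M^*$ perfectly evenly over all $n$ columns, which is exactly where the divisibility relation $N_0 = n/\gcd(n,\nu)$ is used. Once that uniform column sum is in hand, the matrix computation is routine, since the orthogonality $W'W = \rho I_\nu$ forces each block $X_p'X_p$ to be diagonal and the diagonal entries simply accumulate according to the column multiplicities of $M^*$.
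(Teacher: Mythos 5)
Your proof is correct and follows essentially the same route as the paper, which likewise constructs $X$ via \textit{Method-W} and invokes Lemma \ref{optimalG} after asserting $X'X = (N\rho/n)I_n$. In fact you supply the verification details (the cyclic column-regularity of $M^*$ and the block computation $X'X=\sum_p X_p'X_p$ with $W'W=\rho I_\nu$) that the paper leaves implicit in its construction paragraph.
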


\begin{exmp}
	Suppose we want to construct an optimal design for $n = 10$ and $\rho = 3$. There exists two weighing matrices $ W(4,3)$ and $ W(8,3)$ for $\nu \leq 10$, $\rho= 3$. If we use $W(4,3)=(W_1,W_2,W_3,W_4)$ for construction, we get an optimal design $d_1$ in $N_1 = (10 \times 4)/gcd(10,4) = 20$ choice sets. The $M^*$ and $X$ matrix ($X_1$, say) can be written as  \vspace{-.3cm}
	$$ 
	M^* = \left( \begin{array}{cccccccccc} 
	1 & 1 & 1 & 1 & 0 & 0 & 0 & 0 & 0 & 0 \vspace{-.3cm} \\
	0 & 0 & 0 & 0 & 1 & 1 & 1 & 1 & 0 & 0 \vspace{-.3cm} \\
	1 & 1 & 0 & 0 & 0 & 0 & 0 & 0 & 1 & 1 \vspace{-.3cm} \\
	0 & 0 & 1 & 1 & 1 & 1 & 0 & 0 & 0 & 0 \vspace{-.3cm} \\
	0 & 0 & 0 & 0 & 0 & 0 & 1 & 1 & 1 & 1  \\
	\end{array} \right), $$ \vspace{-.3cm}
	$$            
	X_1 = \left( \begin{array}{cccccccccc} 
	W_1 & W_2 & W_4 & W_3 & \mathbf{0} & \mathbf{0} & \mathbf{0} & \mathbf{0} & \mathbf{0} & \mathbf{0} \vspace{-.3cm} \\
	\mathbf{0} & \mathbf{0} & \mathbf{0} & \mathbf{0} & W_4 & W_1 & W_2 & W_3 & \mathbf{0} & \mathbf{0} \vspace{-.3cm} \\
	W_3 & W_4 & \mathbf{0} & \mathbf{0} & \mathbf{0} & \mathbf{0} & \mathbf{0} & \mathbf{0} & W_2 & W_1 \vspace{-.3cm} \\
	\mathbf{0} & \mathbf{0} & W_2 & W_3 & W_4 & W_1 & \mathbf{0} & \mathbf{0} & \mathbf{0} & \mathbf{0} \vspace{-.3cm} \\
	\mathbf{0} & \mathbf{0} & \mathbf{0} & \mathbf{0} & \mathbf{0} & \mathbf{0} & W_3 & W_4 & W_2 & W_1  \\
	\end{array} \right).  \hspace { 0.3cm}
	$$           
	
	\noindent Here $X_1' X_1 = 6I_{10}$ and using Theorem \ref{OptimalD}, the corresponding design $d_1$ is optimal in $\mathcal{D}_{20,10,2,3}$.
	
	Similarly, if we use $W(8,3)$, we get an optimal design $d_2$ in $N_2 = (10 \times 8)/gcd(10,8) = 40$ choice sets in $\mathcal{D}_{40,10,2,3}$. 
	Note that if we use {\it Method-H} instead, then $M$ would be a $10\times 10$ matrix of 0 and 1, where each row sum is 3. In that case, by replacing any three columns of Hadamard matrix of order 4 in the places of 1 and a column of four zeros in places of 0 for each rows of $M$, we get a difference matrix $X$ ($X_3$, say) in $N_3 = (10 \times 4)/gcd(10,3) = 40$ choice sets where  $X_3' X_3 = 12I_{10}$. Thus the corresponding design $d_3$ is also optimal in $\mathcal{D}_{40,10,2,3}$. 
\end{exmp}


We see from the above example that the use of {\it Method-W} instead of {\it Method-H} sometimes help us to get optimal partial profile paired choice designs in lesser number of choice sets. Given $\rho$ and $n$, in order to obtain an optimal design with minimum number of choice sets, we follow the steps as below.
{\it	
\begin{enumerate}
		\item If $ W = W(n,\rho)$ exists, then an optimal design exists in $N = n$ choice sets;  otherwise go to Step 2.  
		\item Find all the $n^*_j (< n)$ such that $W(n^*_j, \rho)$ exists.
		\item Calculate $ K_j = (n n^*_j/gcd(n,n^*_j))$, for every $j$.
		\item $N_1 = min\{K_j \}$ with corresponding $n^*_j = \nu$ (say).
		\item Calculate $ N_2 = (n h(\rho)/gcd(n,\rho))$.
		\item $N = min\{N_1, N_2\}$. 
		\begin{itemize}
			\item [i)] If $N = N_1$, construct optimal design using {\it Method-W}.
			\item [ii)] If $N = N_2$, construct optimal design using {\it Method-H}.
			\item [iii)] If $N = N_1 = N_2$, construct optimal design using either {\it Method-W} or {\it Method-H}.
		\end{itemize}
	\end{enumerate}}

Table \ref{tb1} provides the minimum number of choice sets ($N$) required  for an optimal design, where, $2\leq \rho \leq 6$, $2 < n \leq 15$. 
\begin{table}[ht]
	\caption{ Given $\rho$ and $n$, required ($N$) for optimal design. }
	\centering
	\begin{tabular}{|c|c|c|c|c|c|c|c|c|c|c|c|c|c|}
		\hline 
		$\rho \diagdown n $ & 3 & 4 & 5 & 6 & 7 & 8 & 9 & 10 & 11 & 12 & 13 & 14 & 15   \\
		\hline 
		2 &  \shortstack{6 \\ \tiny{$H_2$}} & \shortstack{4 \\ \tiny{$W$}} & \shortstack{10 \\ \tiny{$H_2$}} & \shortstack{6 \\ \tiny{$W$}} & \shortstack{14 \\ \tiny{$H_2$}} & \shortstack{8 \\ \tiny{$W$}}  & \shortstack{18 \\ \tiny{$H_2$}} & \shortstack{10 \\ \tiny{$W$}} & \shortstack{22 \\ \tiny{$H_2$}} & \shortstack{12 \\ \tiny{$W$}} & \shortstack{26 \\ \tiny{$H_2$}} & \shortstack{14 \\ \tiny{$W$}} & \shortstack{30 \\ \tiny{$H_2$}} \\
		\hline 
		3 & & \shortstack{4 \\ \tiny{$W$}} & \shortstack{20 \\ \tiny{$W_{4,3}$}} & \shortstack{8 \\ \tiny{$H_4$}} & \shortstack{28 \\ \tiny{$W_{4,3}$}} & \shortstack{8 \\ \tiny{$W$}} &  \shortstack{12 \\ \tiny{$H_4$}} & \shortstack{20* \\ \tiny{$W_{4,3}$}} & \shortstack{44 \\ \tiny{$W_{4,3}$}} & \shortstack{12* \\ \tiny{$W_{4,3}$}} & \shortstack{52 \\ \tiny{$W_{4,3}$}} & \shortstack{28* \\ \tiny{$W_{4,3}$}} & \shortstack{20 \\ \tiny{$H_4$}} \\ 
		\hline 
		4 & & & \shortstack{20 \\ \tiny{$H_4$}} & \shortstack{6 \\ \tiny{$W$}} & \shortstack{7 \\ \tiny{$W$}}  & \shortstack{8 \\ \tiny{$W$}} & \shortstack{18* \\ \tiny{$W_{6,4}$}} & \shortstack{10 \\ \tiny{$W$}} & \shortstack{11 \\ \tiny{$W$}} & \shortstack{12 \\ \tiny{$W$}} & \shortstack{13 \\ \tiny{$W$}} & \shortstack{14 \\ \tiny{$W$}} & \shortstack{15 \\ \tiny{$W$}} \\ 
		\hline 
		5 & & & & \shortstack{6 \\ \tiny{$W$}} & \shortstack{42* \\ \tiny{$W_{6,5}$}} & \shortstack{8 \\ \tiny{$W$}} & \shortstack{18* \\ \tiny{$W_{6,5}$}} & \shortstack{10 \\ \tiny{$W$}} & \shortstack{66* \\ \tiny{$W_{6,5}$}} & \shortstack{12 \\ \tiny{$W$}} & \shortstack{78* \\ \tiny{$W_{6,5}$}} & \shortstack{14 \\ \tiny{$W$}} & \shortstack{24 \\ \tiny{$H_8$}} \\ 
		\hline 
		6 & & & & & \shortstack{56 \\ \tiny{$H_8$}} & \shortstack{8 \\ \tiny{$W$}} & \shortstack{24 \\ \tiny{$H_8$}} & \shortstack{40 \\ \tiny{$W_{8,6}$}} & \shortstack{88 \\ \tiny{$W_{8,6}$}} & \shortstack{16 \\ \tiny{$H_8$}}  & \shortstack{104 \\ \tiny{$W_{8,6}$}}  & \shortstack{56 \\ \tiny{$W_{8,6}$}} & \shortstack{40 \\ \tiny{$H_8$}} \\
		\hline 
	\end{tabular}
	\label{tb1}
\end{table}

\begin{remark}
	Table \ref{tb1} gives an overview of achieved minimum number of choice sets ($N$) required to construct an optimal design for given $n$ and $\rho$. This table also helps researchers to decide strategies to run an experiment. Consider the case for $n = 11$ and $\rho = 5$. It is seen from the table that one need 66 choice sets to construct the optimal design. Now the researcher has three options available to reduce the number of choice sets. First, he/she can lower the profile strength $\rho$ to 4, and gets a design in 11 choice sets. Second, he/she can delete one factor that he/she may think is not important and gets a design in 10 choice sets. Third, he/she can add an auxiliary factor in the experiment and gets a design in 12 choice sets.
\end{remark}
\begin{remark}
	Table \ref{tb1} also tells about the construction of optimal choice designs for given  $n$ and $\rho$. If  there is only $W$ written under $N$, then an optimal design can be constructed using weighing matrix $W(n=N, \rho)$. If $W_{\nu,\rho}$ is written under $N$, then an optimal design  can be constructed by $W(\nu,\rho)$ using {\it Method-W} and if $H_r$  is written below $N$, then an optimal design  can be constructed by $H_r$ using {\it Method-H}. A ` $*$' sign after $N$ in some cells convey the fact that {\it Method-W} gives an improved design than {\it Method-H} for those cases. 
\end{remark} 

For $2\leq \rho \leq 6$, $2 < n \leq 15$, Table \ref{tb2} provides the number of choice sets required for achieving optimal designs through {\it Method-W} vis-\`a-vis {\it Method-H}. For the 8 cases, the reduction in number of choice sets through {\it Method-W} over {\it Method-H} ranges from 25\% to 75\%.
\vspace {.2 cm}
\begin{table}[ht]
	\caption{Improved cases using {\it Method-W}}
	\centering
	\begin{tabular}{|c|c|c|c|c|c|c|c|c|}
		\hline
		$(\rho, n)$ & (3, 10) & (3, 12) & (3, 14) & (4, 9) & (5, 7) & (5, 9) & (5, 11) & (5, 13) \\ \hline 
		{\it Method-H} & 40 & 16 & 56 & 36 & 56 & 72 & 88 & 104 \\ \hline
		{\it Method -W} & 20 & 12 & 28 & 18 & 42 & 18 & 66 & 78 \\ \hline 
	\end{tabular}
	\label{tb2}
\end{table}	\vspace{.2cm} 
		
Using the generator technique, we now provide constructions of optimal designs for general choice sets of size $m$. We generate an optimal design $d = (A_1, A_2,\ldots, A_m)$ in $\mathcal{D}_{N,n, m,\rho}$ from an optimal paired design $d' = (A_1, A_2)$ in $\mathcal{D}_{N,n, 2,\rho}$. Let $g_j(w)$ be $j$-th generator with weight $w$, i.e., $g_j(w) = (g_{j1},g_{j2},\ldots, g_{jn})$, where, $g_{jr} = 0,1$, and $\sum_{r=1}^{n} g_{jr} = w$. If $A_i$ is generated from $A_1$ using $g_j(w)$, then we can write $ A_i = A_1 + g_j(w)$,
where, \vspace{-.2cm}
$$ A_i(p,r) = \left\{ \begin{array}{l} A_1(p,r) + g_{jr} \hspace {0.2cm} (\text{mod 2}), \hspace{.2 cm}\text{if $f_r$ is an active factor in $A_1(p,r)$} \vspace{-.2 cm} \\ 
A_1(p,r), \hspace{.2 cm} \text{ if $A_1(p,r)=*$, i.e., $f_r$ is a non-active factor in $A_1(p,r)$,}
\end{array} \right. \vspace{-.2 cm} $$ 
$ 1 \leq p \leq N , \hspace{.2cm}  1 \leq r \leq n$.

\begin{theorem}\label{OptDsn_m}
	Let $G = \{g_1(w_1), g_2(w_2), \ldots,  g_{\alpha}(w_{\alpha})\}$ be the set of $\alpha$ different generators such that both $g_i$ and $\bar{g}_i \notin G$ and $min(\rho, n-\rho) < w_i < max(\rho, n-\rho)$, $i = 1,2, \ldots, \alpha$. If there exists an optimal design  $d'$  in $\mathcal{D}_{N,n,2,\rho}$, then there exists an optimal design  $d$ in $\mathcal{D}_{N,n,m,\rho}$, $m = 1,\ldots,2\alpha + 1,2\alpha + 2 $.
\end{theorem}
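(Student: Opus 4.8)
The plan is to verify the two sufficient conditions for optimality recorded just after Lemma~\ref{CtrOpti_m} — condition (a) that $C$ is diagonal, equivalently $\eta_{hk}^{1+}=\eta_{hk}^{1-}$ for all $h\neq k$, and condition (b) that $\mathrm{trace}(C)$ is maximal, equivalently $n_{ph}$ is balanced at every active factor of every choice set — and then to check that the array produced actually lies in $\mathcal{D}_{N,n,m,\rho}$. First I would fix notation. Writing $\mathbf{0}$ and $\mathbf{1}$ for the all-zero and all-one generators, the $p$-th choice set of $d$ consists of the profiles $A_1(p,\cdot)+g$ (addition modulo $2$ on active positions only, as in the displayed rule preceding the theorem) as $g$ ranges over a generator family. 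Since $A_2=A_1+\mathbf{1}$, the natural family for $m=2\alpha+2$ is $\mathcal{G}=\{\mathbf{0},\mathbf{1}\}\cup\{g_i,\bar{g}_i:1\le i\le\alpha\}$, which has $2\alpha+2$ members and is closed under complementation; for smaller $m$ I would take a subfamily $\mathcal{G}_m\subseteq\mathcal{G}$ that is complementation-closed when $m$ is even and complementation-closed-plus-one when $m$ is odd. Let $\mathcal{A}_p$ denote the set of $\rho$ active factors of the $p$-th choice set of the underlying optimal paired design $d'$.

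For condition (b), fix a choice set $p$ and an active factor $h\in\mathcal{A}_p$. The value at $h$ of the profile $A_1(p,\cdot)+g$ is $A_1(p,h)+g_h \pmod 2$, so complementary generators $g$ and $\bar{g}$ contribute opposite bits at $h$. Hence across a complementation-closed family the $h$-th bit is perfectly balanced, giving $n_{ph}=m/2$ for $m$ even, and dropping one generator tips this to $n_{ph}\in\{(m-1)/2,(m+1)/2\}$ for $m$ odd — exactly the balance demanded by Lemma~\ref{CtrOpti_m}, and uniformly over every active $h$. Moreover, since $d'$ is optimal its diagonal $C$-entries are equal, which forces every factor to be active in the same number $r$ of choice sets of $d'$; combined with the uniform balance above this makes all diagonal entries of $C$ for $d$ equal, so that together with (a) the matrix $C$ is a positive scalar multiple of $I_n$, securing both universal optimality and connectedness.

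For condition (a) I would compute the off-diagonal entries through Lemma~\ref{lm:C_hkVal-3}. For a component pair $(A_1+g,\,A_1+g')$ inside choice set $p$, the contrast $B_h$ takes values $\epsilon_h^{(p)}(-1)^{g_h}$ and $\epsilon_h^{(p)}(-1)^{g'_h}$ on the two profiles, where $\epsilon_h^{(p)}=\pm1$ is the contrast value of $A_1(p,\cdot)$ at $h$; hence
\[
B_h M^{(ij)}B_k' \;=\; 4\,\epsilon_h^{(p)}\epsilon_k^{(p)}\,\delta_h(g,g')\,\delta_k(g,g'),
\]
with $\delta_h(g,g')=\frac{1}{2}\big((-1)^{g_h}-(-1)^{g'_h}\big)\in\{-1,0,1\}$ for active $h$ and $\delta_h=0$ when $h$ is inactive in $p$. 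Summing over the $\binom{m}{2}$ pairs of each choice set and then over $p$ factorizes the entry as $c'_{hk}=4\,\Sigma_{hk}\,T_{hk}$, where $\Sigma_{hk}=\sum_{\{g,g'\}\subseteq\mathcal{G}_m}\delta_h\delta_k$ is independent of $p$ and $T_{hk}=\sum_{p:\,h,k\in\mathcal{A}_p}\epsilon_h^{(p)}\epsilon_k^{(p)}$. Taking $g=\mathbf{0},g'=\mathbf{1}$ (for which $\delta_h=\delta_k=1$) shows that $4\,T_{hk}$ is precisely the off-diagonal entry of $d'$; since $d'$ is optimal, Lemma~\ref{lem1} gives $T_{hk}=0$, whence $c_{hk}=0$ for all $h\neq k$. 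Thus (a) holds for every admissible $m$, imposing no extra requirement on the generators.

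The hard part will be the final step, namely showing $d\in\mathcal{D}_{N,n,m,\rho}$, i.e. that the $m$ profiles of each choice set are genuinely distinct; the active-factor count is automatically $\rho$, inherited from $d'$ because $A_2=A_1+\mathbf{1}$ already varies on all of $\mathcal{A}_p$ while the generators touch no inactive position. This is exactly where the hypotheses on the generators enter: I would show that the weight restriction $\min(\rho,n-\rho)<w_i<\max(\rho,n-\rho)$ forces $1\le|\mathrm{supp}(g_i)\cap\mathcal{A}_p|\le\rho-1$ for every $p$, so that $A_1+g_i$ agrees with neither $A_1$ nor $A_2$ on $\mathcal{A}_p$, while the condition that $G$ contains neither a repeated generator nor a complementary pair $\{g_i,\bar{g}_i\}$ keeps the $2\alpha+2$ generators pairwise distinct and non-complementary. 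Bounding this intersection uniformly over all choice sets — equivalently, ruling out that any $g+g' \pmod 2$ is supported entirely off $\mathcal{A}_p$ — is the only genuinely delicate estimate and the step I would treat most carefully; the optimality conclusion then follows at once from conditions (a) and (b) via the criteria recorded after Lemma~\ref{CtrOpti_m}.
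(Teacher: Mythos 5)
Your construction coincides with the paper's: since trace-maximality of the optimal paired design forces $A_2=A_1+\mathbf{1}$ on every active position, the paper's choice $A_{2u+1}=A_1+g_u$, $A_{2u+2}=A_2+g_u$ is exactly your complementation-closed family $\{\mathbf{0},\mathbf{1}\}\cup\{g_u,\bar{g}_u\}$, and truncating at odd $m$ is your ``closed-plus-one'' case. Your verification of condition (b) is the same balance argument the paper asserts. For condition (a) the paper works component-pair-design by component-pair-design, showing $(\eta^{1+}_{hk})_{ij}=(\eta^{1-}_{hk})_{ij}$ ($=t$ or $=0$ according as both columns flip or not between $A_i$ and $A_j$), whereas you organize the same computation globally via the factorization $c'_{hk}=4\,\Sigma_{hk}T_{hk}$ and kill it with $T_{hk}=0$ from Lemma~\ref{lem1}; that is a slightly slicker but substantively identical route, and your identification of $4T_{hk}$ with the off-diagonal entry of $d'$ is correct. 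Up to this point your proposal is sound and reproduces everything the paper's proof actually establishes.

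The genuine problem is the step you flag as ``the hard part.'' First, be aware that it is not part of the paper's proof at all: the paper only verifies the two optimality criteria recorded after Lemma~\ref{CtrOpti_m} and never proves distinctness of the $m$ options within a choice set; the role of the hypotheses on $G$ is explained only informally in the remark following Example~\ref{ex_m}. Second, and more importantly, the specific claim you propose to establish --- that $\min(\rho,n-\rho)<w_i<\max(\rho,n-\rho)$ forces $1\le|\mathrm{supp}(g_i)\cap\mathcal{A}_p|\le\rho-1$ for every choice set --- is false in general. It does hold when $\rho>n/2$: then $w_i>n-\rho$ prevents the support from hiding in the $n-\rho$ inactive positions, and $w_i<\rho$ prevents it from covering $\mathcal{A}_p$. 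But when $\rho<n/2$ the condition reads $\rho<w_i<n-\rho$, and the support can lie entirely off an active set: take $n=10$, $\rho=3$, $\mathcal{A}_p=\{1,2,3\}$, and $g$ with support $\{5,6,7,8\}$, so $w=4$ satisfies $3<4<7$, yet $A_1+g$ repeats $A_1$ in choice set $p$ (symmetrically, $\mathcal{A}_p\subseteq\mathrm{supp}(g_i)$ would reproduce $A_2$, and a difference $g_i+g_j \pmod 2$ can likewise vanish on a particular $\mathcal{A}_p$ even though $g_j\neq g_i,\bar{g}_i$). So distinctness depends on how the generator supports interact with the active sets of the particular $d'$ and cannot be derived uniformly from the stated weight bounds; pursuing that ``delicate estimate'' would fail. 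If you drop this unprovable claim and let the proof rest, as the paper's does, on conditions (a) and (b) alone, your argument is complete and correct.
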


\begin{proof}
	Let $d' =(A_1, A_2)$ be an optimal design in $\mathcal{D}_{N,n,2,\rho}$ and let $d = (A_1, A_2, \ldots, A_m)$, where,
	$$ A_{2u+1} = A_1 + g_u(w_u) , \hspace{0.4cm} A_{2u+2} = A_2 + g_u(w_u), \hspace{0.4cm} u = 1, 2,\ldots, \alpha.$$
	
	Take any component paired design $\delta_{ij} = (A_i, A_j)$, $i<j$, $i,j = 1,2,\ldots,m$. For any two columns $h$ and $k$, let $(\eta^{1\pm}_{hk})_{ij}$ is the component part of $\eta^{1\pm}_{hk}$ in the design $\delta_{ij}$. Let $U$ be the set of all indices of columns which are changing between $(A_i,A_j)$. Since $(A_1, A_2)$ is optimal in $\mathcal{D}_{N,n,2,\rho}$, then $(\eta^{1+}_{hk})_{12} = (\eta^{1-}_{hk})_{12} = t$ (say). Therefore, for any $\delta_{ij}$, we have the following two cases
	
	\noindent  {\bf Case 1:} if $h,k \in U$, then 
	$(\eta^{1+}_{hk})_{ij} = (\eta^{1-}_{hk})_{ij} = t$, \\
	\noindent 	{ \bf Case 2:} else,  
	$(\eta^{1+}_{hk})_{ij} = (\eta^{1-}_{hk})_{ij} = 0$.
	
	Considering all possible component paired design $\delta_{ij}$ of $d$, we see that $\eta^{1+}_{hk} = \eta^{1-}_{hk}$, for all $h\neq k, h,k \in \{1,2,\ldots,n\}$. Also note that for every active factor $f_h$ of each choice set $S_{pm}$, $n_{ph} = m/2$ (for even $m$) and $n_{ph} = (m-1)/2$ or $(m+1)/2$ (for odd $m$). Hence, $d$ is optimal in $\mathcal{D}_{N,n,m,\rho}$.      
\end{proof}
Let for a design $d=(A_1,A_2,\ldots,A_m)$, $\bar{d}=(\bar{A}_1, \bar{A}_2,\ldots, \bar{A}_m)$ denotes the complement design of $d$, where $\bar{A}$ is the complement  of $A$ (i.e., 0 and 1 interchange their respective positions in $A$). Then we have the following important theorem.
\begin{theorem}\label{OptDsn_m2}
	If $d$ is optimal in $\mathcal{D}_{N,n,m,\rho}$, then $\bar{d}$ is also optimal in  $\mathcal{D}_{N,n,m,\rho}$. 
\end{theorem}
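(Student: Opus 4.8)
The plan is to show that complementation leaves the information matrix for main effects unchanged, i.e.\ $C_{\bar d} = C_d$; both optimality conditions then transfer at once. First I would check that $\bar d$ still lies in $\mathcal{D}_{N,n,m,\rho}$. Complementing every profile of a choice set does not alter which factors vary within that set, so a factor is active in $S_{pm}$ of $d$ precisely when it is active in the corresponding set of $\bar d$. Hence the profile strength $\rho$ and the parameters $N,n,m$ are preserved, and connectedness of $\bar d$ will follow once $C_{\bar d}=C_d$ is established.

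The key observation is a sign cancellation in Lemma \ref{lm:C_hkVal-3}. Fix a component pair $(T_i,T_j)$ of $d$; in $\bar d$ it becomes $(\bar T_i,\bar T_j)$, occupying the complementary positions $\bar i,\bar j$ of the $2^n$-dimensional space. Since complementing a treatment flips every positional value, each main-effect contrast entry satisfies $b_{h\bar i} = -b_{hi}$. Therefore, by Lemma \ref{lm:C_hkVal-3}, $B_h M^{(\bar i\bar j)} B_k' = (b_{h\bar i}-b_{h\bar j})(b_{k\bar i}-b_{k\bar j}) = (-b_{hi}+b_{hj})(-b_{ki}+b_{kj}) = B_h M^{(ij)} B_k'$ for all $h,k\in\{1,\ldots,n\}$ (the case $h=k$ included): the two sign changes cancel. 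Thus every component pair contributes identically to $C$ before and after complementation, and summing over all $N^*$ component pairs through (\ref{BLB2}) gives $C_{\bar d}=C_d$.

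With $C_{\bar d}=C_d$ the conclusion is immediate: because $d$ is optimal, $C_d$ is a scalar multiple of the identity attaining the maximum trace of Lemma \ref{CtrOpti_m}, so the same holds for $C_{\bar d}$, whence $\bar d$ is connected (thus $\bar d\in\mathcal{D}_{N,n,m,\rho}$) and optimal. Equivalently, one may verify the two optimality conditions directly: complementation sends a pair of type $(00,11)_{hk}$ to one of type $(11,00)_{hk}$ and a pair of type $(01,10)_{hk}$ to one of type $(10,01)_{hk}$, and since $M^{(ij)}$ is symmetric these are still counted by $\eta_{hk}^{1+}$ and $\eta_{hk}^{1-}$ respectively, so $\bar\eta_{hk}^{1+}=\eta_{hk}^{1+}$ and $\bar\eta_{hk}^{1-}=\eta_{hk}^{1-}$, preserving condition (a) of Lemma \ref{lem1}; meanwhile $\bar n_{ph}=m-n_{ph}$ keeps condition (b) of Lemma \ref{CtrOpti_m} intact.

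The only delicate point -- the main (and minor) obstacle -- is the bookkeeping of positions under complementation: one must track that $\bar T_i$ occupies the complementary index and that each main-effect contrast entry merely changes sign, so that the product of the two sign flips cancels. Once that is pinned down, the argument is routine.
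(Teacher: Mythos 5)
Your proof is correct, and its primary route is a genuine strengthening of the paper's. The paper disposes of this theorem in two lines: it merely observes that both optimality criteria --- $\eta_{hk}^{1+}=\eta_{hk}^{1-}$ from Lemma \ref{lem1} and the trace-maximizing condition on $n_{ph}$ from Lemma \ref{CtrOpti_m} --- remain satisfied when $0$ and $1$ interchange positions; this is precisely your closing ``equivalently'' paragraph, with $(00,11)_{hk}\mapsto(11,00)_{hk}$ counted as the same unordered type and $\bar n_{ph}=m-n_{ph}$ mapping $\{m/2\}$ (resp.\ $\{(m-1)/2,(m+1)/2\}$) to itself. Your main argument instead establishes the sharper invariance $C_{\bar d}=C_d$ via the sign cancellation $b_{h\bar i}=-b_{hi}$ applied through Lemma \ref{lm:C_hkVal-3}, so that every component pair $(\bar T_i,\bar T_j)$ contributes to $B_{(1)}\Lambda B_{(1)}'$ exactly what $(T_i,T_j)$ does. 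This buys two things the paper leaves implicit: connectedness of $\bar d$, hence its membership in $\mathcal{D}_{N,n,m,\rho}$, follows for free from positive definiteness of $C_{\bar d}=C_d$; and the conclusion passes through the information matrix itself rather than through the particular sufficient conditions for optimality. One caution worth recording: the invariance is specific to the main-effects block. Since $b_{kl\bar i}=b_{k\bar i}b_{l\bar i}=b_{ki}b_{li}=b_{kli}$, a two-factor interaction contrast is unchanged by complementation while a main-effect contrast flips sign, so the cross block $B_{(1)}\Lambda B_{(2)}'$ changes sign under $d\mapsto\bar d$ --- exactly the cancellation the paper exploits in Section 4 when stacking $d$ over $\bar d$. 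As the present theorem sits in Section 3 under the main effects model, your argument is sound as stated, and the positional bookkeeping you flag as the only delicate point is handled correctly.
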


\begin{proof}
	Since $d$ is optimal in $\mathcal{D}_{N,n,m,\rho}$,  it satisfies both the optimality criteria. Note also that both the optimality criteria are satisfied even if 0 and 1 are interchanged their corresponding positions in $d$.  Hence $\bar{d}$ is also optimal in $\mathcal{D}_{N,n,m,\rho}$.  
\end{proof}

\begin{corollary}\label{corrDsn} 
	If $d$ is optimal in $\mathcal{D}_{N,n,m,\rho}$, then $ \mathbf{d} = \left( \begin{array}{c} d \vspace{-.3cm}\\ \bar{d} \end{array} \right)$ is optimal in  $\mathcal{D}_{2N,n,m,\rho}$.
\end{corollary}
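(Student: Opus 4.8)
The plan is to reduce the statement to Theorem~\ref{OptDsn_m2} together with the additive structure of the information matrix in the number of choice sets. First I would invoke Theorem~\ref{OptDsn_m2} to conclude that $\bar d$ is itself optimal in $\mathcal{D}_{N,n,m,\rho}$; hence both $C_d$ and $C_{\bar d}$ are scalar multiples of $I_n$ and both attain the maximum trace $\rho/2^n$ (for $m$ even, and the corresponding value $\rho(m^2-1)/(2^n m^2)$ for $m$ odd) guaranteed by Lemma~\ref{CtrOpti_m}.

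Next I would express the information matrix of the stacked design $\mathbf d$ in terms of those of $d$ and $\bar d$. Since $\mathbf d$ has $2N$ choice sets, namely all the choice sets of $d$ together with all the choice sets of $\bar d$, the matrix $\Lambda^*$ is additive over choice sets, so $\Lambda^*_{\mathbf d} = \Lambda^*_d + \Lambda^*_{\bar d}$. Carrying the constant $1/(2^n (2N) m^2)$ through $C = (1/(2^n N m^2))\,B_{(1)}\Lambda^* B'_{(1)}$ then gives $C_{\mathbf d} = \tfrac12 (C_d + C_{\bar d})$. Because each summand is the same scalar multiple of $I_n$, so is their average, whence $C_{\mathbf d}$ is a scalar multiple of the identity.

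It remains to check the trace. Here I would emphasise that the maximum value of $trace(C)$ supplied by Lemma~\ref{CtrOpti_m} depends only on $\rho,n,m$ and not on the number of choice sets, so it is also the maximum attainable in $\mathcal{D}_{2N,n,m,\rho}$. Since $trace(C_{\mathbf d}) = \tfrac12(trace(C_d) + trace(C_{\bar d}))$ and both terms equal this common maximum, $trace(C_{\mathbf d})$ attains the maximum for the enlarged class. By Kiefer's criterion, $\mathbf d$ is then universally optimal in $\mathcal{D}_{2N,n,m,\rho}$.

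The argument has no serious obstacle; the only point requiring care is the bookkeeping of the normalising constant, which turns the naive sum $\Lambda^*_d+\Lambda^*_{\bar d}$ into the average $\tfrac12(C_d + C_{\bar d})$, together with the observation that the trace bound of Lemma~\ref{CtrOpti_m} is independent of $N$, so that doubling the design neither raises nor lowers the optimal trace. Equivalently, I could argue directly through the two sufficient conditions stated after Lemma~\ref{CtrOpti_m}: condition (b) is a per-choice-set property inherited by every choice set of $\mathbf d$ from $d$ and $\bar d$, while the counts $\eta^{1+}_{hk},\eta^{1-}_{hk}$ are additive over choice sets, so $\eta^{1+}_{hk}(\mathbf d)=\eta^{1+}_{hk}(d)+\eta^{1+}_{hk}(\bar d)=\eta^{1-}_{hk}(d)+\eta^{1-}_{hk}(\bar d)=\eta^{1-}_{hk}(\mathbf d)$, which yields condition (a).
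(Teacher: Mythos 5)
Your proposal is correct and takes essentially the approach the paper intends: the corollary appears there without a separate proof, as an immediate consequence of Theorem~\ref{OptDsn_m2} together with the additivity of the optimality criteria, and your details---$\bar{d}$ optimal, $C_{\mathbf{d}}=\tfrac{1}{2}(C_d+C_{\bar{d}})$ from the additivity of $\Lambda^*$ over choice sets, and the observation that the trace bound of Lemma~\ref{CtrOpti_m} does not depend on $N$---fill in exactly the intended argument. Your alternative check via conditions (a) and (b) (additivity of $\eta^{1\pm}_{hk}$ and the per-choice-set condition on $n_{ph}$) is the same bookkeeping underlying the paper's proof of Theorem~\ref{OptDsn_m2}, so both of your routes match the paper.
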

\vspace{-.3cm}
\begin{exmp}\label{ex_m}
	Suppose we want an optimal design for $2^8$ choice experiment with $m = 5$ and $\rho = 6$. Consider the design
	$ d_5 = (A_1, A_2, A_3, A_4, A_5)$, where,  \vspace{-.3cm} 
	\begin{center}
		$  d_5 = \begin{array}{ccccc}
		(111111\!*\!*, & 000000\!*\!*, & 000111\!*\!*, & 111000\!*\!*, & 110000\!*\!*) \vspace{-.3cm} \\
		(101010\!*\!*, & 010101\!*\!*, & 010010\!*\!*, & 101101\!*\!*, & 100101\!*\!*) \vspace{-.3cm} \\
		(1100\!*\!*11, & 0011\!*\!*00, & 0010\!*\!*11, & 1101\!*\!*00, & 1111\!*\!*11) \vspace{-.3cm} \\
		(1001\!*\!*10, & 0110\!*\!*01, & 0111\!*\!*10, & 1000\!*\!*01, & 1010\!*\!*10) \vspace{-.3cm} \\
		(00\!*\!*1111, & 11\!*\!*0000, & 11\!*\!*1111, & 00\!*\!*0000, & 00\!*\!*0011) \vspace{-.3cm} \\
		(01\!*\!*1010, & 10\!*\!*0101, & 10\!*\!*1010, & 01\!*\!*0101, & 01\!*\!*0110) \vspace{-.3cm} \\
		(*\!*\!001100, & *\!*\!110011, & *\!*\!101100, & *\!*\!010011, & *\!*\!110000) \vspace{-.3cm} \\
		(*\!*\!011001, & *\!*\!100110, & *\!*\!111001, & *\!*\!000110, & *\!*\!100101) \\
		\end{array} $.
	\end{center} \vspace{-.3cm}
	Note that the paired design $(A_1,A_2)$ is optimal in $\mathcal{D}_{8,8,2,6}$ and $A_3, A_4, A_5$ is generated from $(A_1,A_2)$ using generators  $g_1 = (11100000)$ and $g_2 = (00111100)$, where  $A_3 = A_1 + g_1$, $A_4 = A_2 + g_1$ and $A_5 = A_1 + g_2$. Thus from Theorem \ref{OptDsn_m}, $d_5$ is optimal in  $\mathcal{D}_{8,8,5,6}$.
	
	Note that if we take a generator $g$ of weight  $ w = 2$, outside the range of $w$, say $g = g_1 = (11000000)$, then the last two choice sets of $d_5$ would look like \vspace{-.5cm}
	\begin{center}
		$ \begin{array}{ccccc}
		(*\!*\!001100, & *\!*\!110011, & *\!*\!001100, & *\!*\!110011,  & *\!*\!110000) \vspace{-.3cm} \\
		(*\!*\!011001, & *\!*\!100110, & *\!*\!011001, & *\!*\!100110,  & *\!*\!100101). \\
		\end{array} $ 
	\end{center} \vspace{-.3cm}
	Here the third and forth treatments are mere repetition of first two treatments. Similar situation occurs if $g$ and $\bar{g}$ are both in $G$. That is why we need both the conditions on $g$ in the Theorem \ref{OptDsn_m}, so that we get choice sets consisting of distinct options.        
\end{exmp}
We end this section with another important construction for general $m$.
\begin{theorem}\label{Multi}
	If there exists an optimal design $d$ in $\mathcal{D}_{N,n,m,\rho}$, then there exists an optimal design $d^t$ in $ \mathcal{D}_{Nt,nt,m,\rho}$, for all $t \geq 1$.    
\end{theorem}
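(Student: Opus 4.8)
The plan is to build $d^t$ explicitly as a \emph{block-diagonal} juxtaposition of $t$ relabelled copies of $d$, and then to verify the two optimality criteria (diagonality of $C$ and maximality of $trace(C)$) that were distilled from Lemma~\ref{lem1} and Lemma~\ref{CtrOpti_m}. First I would partition the $nt$ factors of the target class into $t$ consecutive blocks $\mathcal{B}_1,\dots,\mathcal{B}_t$ of $n$ factors each, and for each $i$ let $d^{(i)}$ be the copy of $d$ that places its $n$ factors on block $\mathcal{B}_i$ while holding every factor outside $\mathcal{B}_i$ fixed (inactive) across the $m$ options of each of its $N$ choice sets. Stacking $d^{(1)},\dots,d^{(t)}$ gives $d^t$ with exactly $Nt$ choice sets of size $m$, each having precisely $\rho$ active factors (all lying inside one block), so $d^t$ does belong to the class $\mathcal{D}_{Nt,nt,m,\rho}$ once connectedness is checked.

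Next I would verify the trace criterion. Because the active factors of every choice set of $d^{(i)}$ sit on block $\mathcal{B}_i$ and inherit their $m$ level-patterns from $d$, the counts $n_{ph}$ of each active factor are exactly those of the corresponding choice set of $d$; since $d$ is optimal these already satisfy $n_{ph}=m/2$ (even $m$) or $n_{ph}=(m\pm1)/2$ (odd $m$). Hence by Lemma~\ref{CtrOpti_m} every choice set of $d^t$ attains the maximal per-set contribution to $trace(C)$, and summing over all $Nt$ sets (with $n$ replaced by $nt$ and $N$ by $Nt$ in the formula) yields $trace(C)=\rho/2^{nt}$ for $m$ even and $\rho(m^2-1)/(2^{nt}m^2)$ for $m$ odd, i.e. the maximum allowed in $\mathcal{D}_{Nt,nt,m,\rho}$.

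Then I would verify the diagonality criterion $\eta_{hk}^{1+}=\eta_{hk}^{1-}$ for all $h\neq k$ by a two-case argument on the positions of $h$ and $k$ among the $nt$ factors. If $h,k$ lie in the same block $\mathcal{B}_i$, the only choice sets in which both vary are those of $d^{(i)}$, so $\eta_{hk}^{1+}$ and $\eta_{hk}^{1-}$ equal the corresponding counts in $d$, which balance by the optimality of $d$; the remaining copies contribute nothing since $h,k$ are held fixed there. If $h,k$ lie in different blocks, then no single choice set of $d^t$ has both of them active (each set activates only one block), so no pair of type $(00,11)_{hk}$ or $(01,10)_{hk}$ can occur and $\eta_{hk}^{1+}=\eta_{hk}^{1-}=0$. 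Thus $C$ is diagonal; and since every factor is active in at least one choice set (inherited from the connectedness of $d$), every diagonal entry is strictly positive, so $C$ is positive definite and $d^t$ is connected.

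The two criteria together give optimality of $d^t$ in $\mathcal{D}_{Nt,nt,m,\rho}$. I expect the only delicate point to be the cross-block case of the diagonality check: one must argue cleanly that activating a single block per choice set forces every mixed pair $(h,k)$ from distinct blocks to contribute zero to both $\eta_{hk}^{1+}$ and $\eta_{hk}^{1-}$, which is exactly where the block-diagonal structure is essential. For the special case $m=2$ the whole argument collapses to the identity $(I_t\otimes X)'(I_t\otimes X)=(N\rho/n)I_{nt}$ together with Lemma~\ref{optimalG}, which can serve as a sanity check.
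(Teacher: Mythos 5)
Your construction is exactly the paper's: the block-diagonal juxtaposition of $t$ relabelled copies of $d$ is precisely $A_i^t = I_t \otimes A_i$, and your verification of the two optimality criteria (within-block balance of $\eta_{hk}^{1\pm}$ inherited from $d$, vanishing cross-block counts, and per-set trace maximality) is just the detailed working behind the paper's ``it is easy to see'' step. Your proposal is correct and takes essentially the same approach, with the $m=2$ identity $(I_t\otimes X)'(I_t\otimes X)=(N\rho/n)I_{nt}$ as a valid sanity check.
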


\begin{proof}
	Let $I_t$ be the identity matrix of order $t$ and $d = (A_1, A_2, \ldots, A_m)$ be an optimal design in $\mathcal{D}_{N, n,m,\rho}$.
	Now consider, $d^t = (A_1^t, A_2^t, \ldots, A_m^t)$, where,	$ A_i^t = I_t \otimes A_i,$
	$i = 1,2,\ldots,m$. It is easy to see that $d^t$ satisfies both the criteria of universal optimality in $ \mathcal{D}_{Nt,nt,m,\rho}$. Hence, $d^t$ is optimal in  $\mathcal{D}_{Nt, nt, m, \rho}$.
\end{proof}

\section{ Optimal designs under the broader main effects model}
In many choice situations, the presence of two factor interactions can't be ignored, however researchers may be interested in estimating only the main effects. Under the broader main effects model, choice designs are designs which ensure estimability of all the main effects under the absence of three and higher order interaction effects. Following \cite{r6}, the information matrix of $\theta_1$ corresponding to a design $d \in \mathcal{D}_{N,n,m,\rho}$, under the broader main effects model is 
\begin{equation*}
C = (1/2^n) \{ B_{(1)} \Lambda B'_{(1)} - B_{(1)} \Lambda B'_{(2)}[B_{2} \Lambda B'_{(2)}]^{-}B_{(2)} \Lambda B'_{(1)}\}.
\end{equation*} 

Note that $B_{(1)} \Lambda B'_{(2)}[B_{(2)} \Lambda B'_{(2)}]^{-}B_{(2)} \Lambda B'_{(1)}$ is a non-negative definite matrix and 
$$trace(2^n C) = trace(B_{(1)} \Lambda B'_{(1)}) - trace(B_{(1)} \Lambda B'_{(2)}[B_{(2)} \Lambda B'_{(2)}]^{-}B_{(2)} \Lambda B'_{(1)}).$$
Thus, under the broader main effects model, $trace(C) \leq trace \left((1/2^n)B_{(1)} \Lambda B'_{(1)}\right)$ with equality attaining when $B_{(1)} \Lambda B'_{(2)}$ is a null matrix. For the purpose of this section, we need to define some more notations here. Let $\eta_{hk}^{2+}$ and $\eta_{hk}^{2-}$ are the total number of  component pairs of the type $(01,11)_{(hk)}$ and $(00,10)_{(hk)}$ respectively in  $d$. Similarly, let $\eta_{h(kl)}^{3+}$ and  $\eta_{h(kl)}^{3-}$ are the total number of component pairs of the type  $\{0(a_1a_2), 1(a'_1a'_2)\}_{h(kl)}$, $a_1\neq a_2$, $a'_1 = a'_2$   
and $\{0(a_1a_2), 1(a'_1a'_2)\}_{h(kl)}$, $a_1=a_2$, $a'_1 \neq a'_2$; $a_i, a'_i \in \{0,1\}$, $i =1,2$,  respectively in  $d$. Note also that if $B_{kl}$, $k < l, k,l = 1,\ldots,n$, corresponds to the $r$-th contrast vector in $B_{(2)}$, then $r= \sum_{i = 1}^{k-1} (n-i) + (l-k)$. We now find the necessary and sufficient conditions for  $B_{(1)} \Lambda B'_{(2)}$ to be a null matrix.

\begin{lemma}\label{lem2}
	$B_{(1)} \Lambda B'_{(2)}$ is null if and only if $i) \eta_{hk}^{2+}$ = $\eta_{hk}^{2-}$ and $ii) \eta_{h(kl)}^{3+}$ = $\eta_{h(kl)}^{3-}$, for all $ h\neq k\neq l, h, k, l \in \{1,\ldots , n\}$.
\end{lemma}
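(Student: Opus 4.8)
The plan is to mirror the argument of Lemma \ref{lem1}, but now applied to the entire off-diagonal block $B_{(1)}\Lambda B'_{(2)}$ rather than to a single off-diagonal entry of $C$. Since $\Lambda = (1/Nm^2)\Lambda^*$ with $1/Nm^2>0$, the matrix $B_{(1)}\Lambda B'_{(2)}$ is null if and only if $B_{(1)}\Lambda^* B'_{(2)}$ is null, which in turn holds if and only if every entry $B_h\Lambda^* B'_{kl}$ vanishes, where $h$ ranges over the $n$ main effects and $(k,l)$, $k<l$, over the $\binom{n}{2}$ two factor interactions.

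First I would expand each entry as a sum over component pairs. Using the decomposition behind (\ref{BLB2}), namely $\Lambda^* = \sum_{j_1<\cdots<j_m} N_{j_1\cdots j_m}\sum_{j_r<j_{r'}} M^{(j_r j_{r'})}$, each entry becomes
$$B_h\Lambda^* B'_{kl} = \sum_{(T_i,T_j)} B_h M^{(ij)} B'_{kl},$$
the sum ranging over all $N^*$ component pairs of $d$. Each summand is governed by Lemma \ref{lemma:converse-3}, so I would split the analysis according to whether the main-effect index $h$ belongs to $\{k,l\}$ or not.

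In the case $h\in\{k,l\}$ (say $h=k$, the case $h=l$ being identical after relabelling the remaining index), the entry is of the ``$F_h$ and $F_{hk}$'' type, so Case 2 of Lemma \ref{lemma:converse-3} applies: a component pair contributes $+4$ when it is of type $(01,11)_{(hk)}$, $-4$ when of type $(00,10)_{(hk)}$, and $0$ otherwise. Summing over all component pairs gives $B_h\Lambda^* B'_{hk} = 4(\eta_{hk}^{2+}-\eta_{hk}^{2-})$, which vanishes exactly when $\eta_{hk}^{2+}=\eta_{hk}^{2-}$. In the case $h\notin\{k,l\}$ the entry is of the ``$F_h$ and $F_{kl}$'' type, so Case 3 applies; grouping the four ``$+4$'' configurations into the count $\eta_{h(kl)}^{3+}$ and the four ``$-4$'' configurations into $\eta_{h(kl)}^{3-}$ yields $B_h\Lambda^* B'_{kl}=4(\eta_{h(kl)}^{3+}-\eta_{h(kl)}^{3-})$, which vanishes exactly when $\eta_{h(kl)}^{3+}=\eta_{h(kl)}^{3-}$.

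Collecting the two cases, every entry of $B_{(1)}\Lambda^* B'_{(2)}$ is zero if and only if condition (i) holds for every entry with $h\in\{k,l\}$ and condition (ii) holds for every entry with $h\notin\{k,l\}$, that is, for all $h\neq k\neq l$. I do not anticipate a genuine obstacle here: the content is a direct translation of Lemma \ref{lemma:converse-3} into sums of signed counts, exactly as in Lemma \ref{lem1}. The only point requiring care is the bookkeeping, namely confirming that the counts $\eta_{hk}^{2\pm}$ and $\eta_{h(kl)}^{3\pm}$ defined just above match precisely the ``$+4$'' and ``$-4$'' configurations of Cases 2 and 3, and that the symmetric treatment of $h=k$ versus $h=l$ is absorbed by letting the indices range over all relevant pairs.
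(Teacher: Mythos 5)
Your proposal is correct and follows essentially the same route as the paper's own proof: scale out the factor $1/Nm^2$, expand each entry of $B_{(1)}\Lambda^* B'_{(2)}$ as a sum of $B_h M^{(ij)} B'_{kl}$ over component pairs via (\ref{BLB2}), and apply Cases 2 and 3 of Lemma \ref{lemma:converse-3} to obtain $4(\eta_{hk}^{2+}-\eta_{hk}^{2-})$ and $4(\eta_{h(kl)}^{3+}-\eta_{h(kl)}^{3-})$ respectively. Your extra bookkeeping remarks (matching the four $\pm 4$ configurations to the definitions of $\eta_{h(kl)}^{3\pm}$, and the $h=k$ versus $h=l$ relabelling) are accurate and only make explicit what the paper leaves implicit.
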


\begin{proof}
	Note that $B_{(1)} \Lambda B'_{(2)} = (1/Nm^2) B_{(1)}\Lambda^* B'_{(2)}$. Let $c''_{hr}$ denote the $(h,r)$-th element of $ B_{(1)}\Lambda^* B'_{(2)}$. Then from (\ref{BLB2}) and the {\it Case 2}, {\it Case 3} of Lemma \ref{lemma:converse-3},  we have the following two cases
	\begin{itemize}
		\item[Case 1]
		Let $h$ corresponds to the main effect $F_h$ and $r$ corresponds to the two factor interaction effect $F_{hk}$. Then,
		\begin{eqnarray*}\label{Cstarxx}
			c''_{hr} &=& \sum_{j_{1}<j_{2}< \cdots <j_{m}} N_{j_1 j_2 \ldots j_m}\sum_{j_{r}<j_{r'}}\{B_hM^{(j_{r}j_{r'})} B_{hk}'\}\\
			&=& \left[4(\eta_{hk}^{2+} -\eta_{hk}^{2-})+ 0\{N^* - (\eta_{hk}^{2+} + \eta_{hk}^{2-})\} \right].
		\end{eqnarray*}
		Thus $c''_{hr}=0$ if and only if $\eta_{hk}^{2+} = \eta_{hk}^{2-}$.
		
		\item[Case 2]  Let $h$ corresponds to the main effect $F_h$ and $r$ corresponds to the two factor interaction effect $F_{kl}$. Then,
		\begin{eqnarray*}\label{Cstarxx}
			c''_{hr} &=& \sum_{j_{1}<j_{2}< \cdots <j_{m}} N_{j_1 j_2 \ldots j_m}\sum_{j_{r}<j_{r'}}\{B_hM^{(j_{r}j_{r'})} B_{kl}'\}\\
			&=& \left[ 4(\eta_{h(kl)}^{3+} - \eta_{h(kl)}^{3-})+ 0\{N^* - (\eta_{h(kl)}^{3+} + \eta_{h(kl)}^{3-})\}\right].
		\end{eqnarray*}
	\end{itemize}
	Thus $c''_{hr}=0$ if and only if $\eta_{h(kl)}^{3+} = \eta_{h(kl)}^{3-}$.
\end{proof}

In what follows, $d$ is optimal in  $\mathcal{D}_{N,n,m, \rho}$ under the broader main effects model if (a) $d$ is optimal under the main effects model and (b) $B_{(1)} \Lambda B'_{(2)}$ is a null matrix i.e.
i) $\eta_{hk}^{2+} = \eta_{hk}^{2-}$,  ii)  $\eta_{h(kl)}^{3+} = \eta_{h(kl)}^{3-}$, 
for all $ h\neq k\neq l, h,k,l \in \{1,2,\ldots,n\} $.\\

The following theorem provides a construction for optimal designs under the broader main effects model. 
\begin{theorem}
	If $d$ is an optimal design in $\mathcal{D}_{N,n,m,\rho}$ under the main effects model, then $ \mathbf{d} = \left( \begin{array}{c} d \vspace{-.3cm} \\ \bar{d} \end{array} \right)$ is optimal in  $\mathcal{D}_{2N,n,m,\rho}$ under the broader main effects model.  
\end{theorem}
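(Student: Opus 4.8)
The plan is to verify the two optimality criteria for the broader main effects model stated just before the theorem: that $\mathbf{d}$ is optimal under the main effects model (criterion (a)), and that $B_{(1)} \Lambda B'_{(2)}$ is null, i.e., $\eta_{hk}^{2+} = \eta_{hk}^{2-}$ and $\eta_{h(kl)}^{3+} = \eta_{h(kl)}^{3-}$ for all distinct $h,k,l$ (criterion (b)). Criterion (a) requires no new work: since $d$ is optimal in $\mathcal{D}_{N,n,m,\rho}$ under the main effects model, Corollary \ref{corrDsn} immediately gives that $\mathbf{d}$ is optimal in $\mathcal{D}_{2N,n,m,\rho}$ under the main effects model. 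So the whole content of the proof lies in establishing criterion (b).

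The key idea for (b) is that complementation interchanges the ``$+$'' and ``$-$'' component-pair types, and that the $\eta$-counts are additive over the stacked design. First I would treat $\eta^{2\pm}$ for the interaction $F_{hk}$. A component pair $(T_i,T_j)$ of type $(00,10)_{(hk)}$, counted by $\eta_{hk}^{2-}$, becomes after complementation the pair $(\bar T_i,\bar T_j)$ whose $(h,k)$-values are $(11,01)$; as an unordered pair this is exactly the type $(01,11)_{(hk)}$ counted by $\eta_{hk}^{2+}$, and the reverse holds symmetrically. Hence $\eta_{hk}^{2+}(\bar d)=\eta_{hk}^{2-}(d)$ and $\eta_{hk}^{2-}(\bar d)=\eta_{hk}^{2+}(d)$. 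Because the component pairs of $\mathbf{d}$ are precisely those of $d$ together with those of $\bar d$, the counts add, so $\eta_{hk}^{2+}(\mathbf{d})=\eta_{hk}^{2+}(d)+\eta_{hk}^{2-}(d)=\eta_{hk}^{2-}(\mathbf{d})$, giving (b)(i).

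The identical argument disposes of $\eta^{3\pm}$ for $F_{kl}$ with $h\neq k\neq l$. Reading the type from the treatment with $h$-value $0$, a pair counted by $\eta_{h(kl)}^{3+}$ has unequal $(k,l)$-bits paired against equal $(k,l)$-bits; complementation flips every bit, so relative to the new $h$-value-$0$ treatment the pattern of equal-versus-unequal $(k,l)$-bits is reversed, producing exactly the type counted by $\eta_{h(kl)}^{3-}$, and conversely. Concretely one checks this against Case 3 of Lemma \ref{lemma:converse-3}, confirming that each entry of its ``$+$'' list maps under complementation to an entry of its ``$-$'' list and vice versa. Thus $\eta_{h(kl)}^{3+}(\bar d)=\eta_{h(kl)}^{3-}(d)$ and $\eta_{h(kl)}^{3-}(\bar d)=\eta_{h(kl)}^{3+}(d)$, and adding over $d$ and $\bar d$ yields $\eta_{h(kl)}^{3+}(\mathbf{d})=\eta_{h(kl)}^{3-}(\mathbf{d})$, which is (b)(ii). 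By Lemma \ref{lem2} this makes $B_{(1)}\Lambda B'_{(2)}$ null for $\mathbf{d}$, so both criteria hold and $\mathbf{d}$ is optimal under the broader main effects model.

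I expect the main obstacle to be the bookkeeping in the complementation step: one must track carefully how the bit-flip acts on the \emph{ordered} positional-value types $(0(\cdot),1(\cdot))$ versus the underlying \emph{unordered} component pair, and verify using the case lists of Lemma \ref{lemma:converse-3} that the ``$+$'' and ``$-$'' families are genuinely swapped rather than one family being left invariant. Once this symmetry is pinned down, additivity of the $\eta$-counts over the two stacked halves of $\mathbf{d}$ makes the required equalities fall out with no further computation.
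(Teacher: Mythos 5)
Your proposal is correct and follows essentially the same route as the paper: optimality under the main effects model via Corollary \ref{corrDsn}, then verification of Lemma \ref{lem2} by showing that complementation swaps the $(01,11)_{(hk)}$ and $(00,10)_{(hk)}$ types and likewise the $\eta^{3+}_{h(kl)}$ and $\eta^{3-}_{h(kl)}$ types, so the counts balance over the two stacked halves. The only cosmetic difference is bookkeeping --- the paper pairs component pairs within each $\delta^*_{ij} = \bigl(\begin{smallmatrix} \delta_{ij} \\ \bar{\delta}_{ij} \end{smallmatrix}\bigr)$ while you sum the $\eta$-counts globally over $d$ and $\bar{d}$ --- which is mathematically the same argument.
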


\begin{proof}
	Note from Corollary \ref{corrDsn} that $\mathbf{d}$ is optimal in  $\mathcal{D}_{2N,n,\rho, m }$ under the main effects  model. Therefore, we only need to show that $\mathbf{d}$ also satisfies Lemma \ref{lem2}. Consider a component paired design $\delta^*_{ij} =  \left( \begin{array}{c} \delta_{ij} \vspace{-.3cm} \\ \bar{\delta}_{ij} \end{array} \right)$, $i < j$, $i,j = 1,2, \ldots, m$, of $\mathbf{d}$. Note that for any two columns $h$ and $k$, if there exists a pair of type $(00,10)_{(hk)}$  in $\delta_{ij}$, then the type $(01,11)_{(hk)}$ exists in the corresponding pair of $\bar{\delta}_{ij}$ and vice-versa. Similarly, for any three columns $h$, $k$ and $l$, if there exists a  pair of type $\{0(a_1a_2), 1(a'_1a'_2)\}_{h(kl)}$, $a_1=a_2$, $a'_1 \neq a'_2$,  in $\delta_{ij}$, then the type $\{0(a_1a_2), 1(a'_1a'_2)\}_{h(kl)}$, $a_1\neq a_2$, $a'_1 = a'_2$; $a_i, a'_i \in \{0,1\}$, $i =1,2$, exists in the corresponding pair of $\bar{\delta}_{ij}$ and vice-versa. Thus considering all possible $m(m-1)/2$ component paired designs $\delta^*_{ij}$ in $\mathbf{d}$, we see that $\mathbf{d}$ satisfies  Lemma \ref{lem2}. Hence the theorem.    
\end{proof}

\begin{exmp}
	From Example \ref{ex_m} we see that $d_5$ is optimal in $\mathcal{D}_{8,8,5,6}$. Then $ \mathbf{d}_5 = \left( \begin{array}{c} d_5 \vspace{-.3cm} \\ \bar{d}_5 \end{array} \right)$ is optimal in $\mathcal{D}_{16,8,5,6}$ under the broader main effects model. 
\end{exmp}
\begin{theorem}
	Under the broader main effects model set-up, if there exists an optimal design $d$ in $\mathcal{D}_{N,n,m,\rho}$, then there exists an optimal design $d^t$ in $ \mathcal{D}_{Nt,nt,m,\rho}$, for all $t \geq 1$.    
\end{theorem}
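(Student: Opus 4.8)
The plan is to reuse the block-diagonal construction of Theorem \ref{Multi}: set $A_i^t = I_t\otimes A_i$ for $i=1,\ldots,m$ and $d^t=(A_1^t,\ldots,A_m^t)$, so that $d^t$ consists of $t$ independent copies of $d$, the $s$-th copy living on the $s$-th block of $n$ consecutive factors (columns $(s-1)n+1,\ldots,sn$) and on the $s$-th block of $N$ consecutive choice sets. Theorem \ref{Multi} already certifies that $d^t$ is optimal in $\mathcal{D}_{Nt,nt,m,\rho}$ under the main effects model, i.e. it meets the two criteria coming from Lemma \ref{lem1} and Lemma \ref{CtrOpti_m}. By the characterization recorded just after Lemma \ref{lem2}, it therefore remains only to check that $B_{(1)}\Lambda B'_{(2)}$ is null for $d^t$, that is, that conditions (i) $\eta^{2+}_{hk}=\eta^{2-}_{hk}$ and (ii) $\eta^{3+}_{h(kl)}=\eta^{3-}_{h(kl)}$ of Lemma \ref{lem2} hold for every triple of relabelled factors of $d^t$.

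The structural fact I would exploit is that within any single choice set of $d^t$ exactly one block of $n$ factors is active, and every factor lying in a different block is held at a constant level across that choice set. Consequently a component pair $(T_i,T_j)$ of $d^t$ in which a given factor $h$ changes can only arise inside the block-copy that contains $h$. I would thus split the verification of Lemma \ref{lem2} according to the block membership of the indices $h,k,l$. When $h,k$ (and $l$) all lie in a common block $s$, every contributing pair comes from the $s$-th copy of $d$, so the counts $\eta^{2\pm}_{hk}$ and $\eta^{3\pm}_{h(kl)}$ for $d^t$ coincide with the corresponding counts for $d$; these are balanced precisely because $d$ is assumed optimal under the broader main effects model, i.e. $d$ itself satisfies Lemma \ref{lem2}. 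Hence the same-block part requires no new work.

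The main obstacle is the cross-block part, where $h$ sits in one block and at least one of $k,l$ sits in a different block. Here $h$ changes only in the copy indexed by its own block, while the factor(s) from the other block(s) are non-active and held at a fixed level throughout that choice set. By Case 2 of Lemma \ref{lemma:converse-3}, the sign of such a pair's contribution to $B_hM^{(ij)}B'_{hk}$ is governed solely by whether that fixed level equals $1$ (contributing $+4$, hence feeding $\eta^{2+}_{hk}$) or $0$ (contributing $-4$, feeding $\eta^{2-}_{hk}$), with the analogous dichotomy for $\eta^{3\pm}_{h(kl)}$ via Case 3. Therefore cross-block cancellation holds if and only if, among the choice sets in which $h$ actually changes, the other block's factor takes the levels $0$ and $1$ equally often. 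This is exactly where the plain Kronecker background, with its identically zero off-diagonal blocks, is delicate: if those blocks are left at $0$, every cross-block pair feeds $\eta^{2-}$ and none feeds $\eta^{2+}$, breaking (i). This is the step I expect to be the crux.

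To force the required balance I would fill the off-diagonal (non-active) blocks with a complementary, fold-over background rather than leaving them zero: arrange that whenever one choice set presents an inactive block at level $0$, a matched choice set presents the same inactive block at level $1$, while leaving every active block, and hence the main-effects optimality certified by Theorem \ref{Multi}, untouched. With such a balanced background each cross-block pair contributing $+4$ is matched by one contributing $-4$, so $\eta^{2+}_{hk}=\eta^{2-}_{hk}$ and $\eta^{3+}_{h(kl)}=\eta^{3-}_{h(kl)}$ hold across blocks as well. Combined with the same-block inheritance, this yields $B_{(1)}\Lambda B'_{(2)}=0$ for $d^t$, which together with Theorem \ref{Multi} gives optimality of $d^t$ in $\mathcal{D}_{Nt,nt,m,\rho}$ under the broader main effects model. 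I anticipate the bookkeeping for $\eta^{3\pm}_{h(kl)}$, where the three factors may be spread over two or three blocks and one must track the eight patterns of Case 3 of Lemma \ref{lemma:converse-3}, to be the most calculation-heavy sub-case, though structurally it mirrors the $\eta^{2\pm}$ analysis.
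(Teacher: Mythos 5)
The paper's own ``proof'' of this theorem is a single line: it asserts that the argument is similar to that of Theorem \ref{Multi}, i.e.\ that the plain Kronecker construction $A_i^t = I_t \otimes A_i$ satisfies both optimality criteria under the broader main effects model as well. You take the same construction as your starting point, but you go genuinely further than the paper: you observe, correctly, that the structural off-diagonal blocks of $I_t\otimes A_i$ hold every cross-block factor constant at a single level in all choice sets where $h$ is active, so by Case 2 of Lemma \ref{lemma:converse-3} every component pair in which $h$ changes feeds the same one of $\eta^{2+}_{hk}$, $\eta^{2-}_{hk}$ (all of type $(00,10)_{(hk)}$ if the background is $0$), and condition (i) of Lemma \ref{lem2} fails for every cross-block pair $(h,k)$. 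Since $h$ changes in $N\rho/n>0$ choice sets of its copy, this imbalance is strict, so the literal construction does not make $B_{(1)}\Lambda B'_{(2)}$ null --- the paper's one-line proof glosses over exactly this point, which its own Theorem 4.1 proof (tracking constant-factor levels under complementation) shows cannot be ignored. Your repair, instantiating the inactive cross-blocks with a fold-over background so that levels $0$ and $1$ appear in matched choice sets, is the right idea and does restore conditions (i) and (ii): note that by Lemma \ref{CtrOpti_m} the weight $n_{ph}(m-n_{ph})$ is a common constant on all active positions of an optimal design, so matched choice sets automatically contribute equally.

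The one step you assert rather than prove is the existence of such a matching: you need a $2$-coloring of the $N$ choice sets of each copy that is balanced on the active set of every factor $h$ simultaneously, which forces in particular that each factor be active in an even number, $N\rho/n$, of choice sets. This is not verified for an arbitrary $d$ that is optimal under the broader main effects model, though it holds automatically whenever $d$ has fold-over structure --- pair each choice set with its complement (equal activity pattern and weights) and assign complementary backgrounds --- and every broader-model optimal design the paper actually constructs is of this form, namely $\mathbf{d} = \bigl( \begin{smallmatrix} d \\ \bar{d} \end{smallmatrix} \bigr)$ from Theorem 4.1. So: your proposal identifies a real gap in the paper's argument and supplies the essentially correct fix, modulo this one existence statement, which you should either prove in general or add as a hypothesis (e.g.\ that $d$ is a fold-over design).
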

\begin{proof}
	The proof is similar to the proof of Theorem \ref{Multi}.
\end{proof}
\vspace{.5cm}

\noindent {\bf Acknowledgment:} The authors thank Prof. Niranjan Balachandran and Ms. Rakhi Singh of IIT Bombay for their constructive comments on this article, which have substantially improved the clarity of its presentation.  
\vspace{.3cm} \\
{\bf References}

\end{document}